\newtheorem{theorem}{Theorem}
\newtheorem{lemma}[theorem]{Lemma}
\newtheorem{proposition}[theorem]{Proposition}
\newcommand{\Xomit}[1]{ }
\newenvironment{proof}[1][Proof]{\textbf{#1.} }{\ \rule{0.5em}{0.5em}}
\mathchardef\mhyphen="2D
\newcommand{\eps}{\upvarepsilon}
\begin{document}

\title{Parallel solutions for ordinal scheduling  \\ with a small number of machines}
\date{}
\author{Leah Epstein\thanks{
Department of Mathematics, University of Haifa, Haifa, Israel.
\texttt{lea@math.haifa.ac.il}. }}
\maketitle



\begin{abstract}
We study ordinal makespan scheduling on small numbers of identical machines, with respect to two parallel solutions.
In ordinal scheduling, it is known that jobs are sorted by non-increasing sizes, but the specific sizes are not known in advance. For problems with two parallel solutions, it is required to design two solutions, and the performance of algorithm is tested for each input using the best solution of the two.
We find tight results for makespan minimization on two and three machines, and algorithms that have strictly better competitive ratios than the best possible algorithm with a single solution also for four and five machines. To prove upper bounds, we use a new approach of considering pairs of machines from the two solutions.
\end{abstract}

\section{Introduction}
Ordinal scheduling on $m\geq 2$ identical machines is defined as follows. The algorithm is given a certain number of jobs, $n\geq 1$, where $n$ is not known in advance. Job $j$ (for a positive integer $j$) has a size $p_j\geq 0$, which is unknown to the algorithm at the time of assignment. It is known in advance that the input is sorted by non-increasing size, in the sense that $p_j \leq p_{j+1}$ for any $j\geq 1$, where jobs that do not exist in the input are seen as jobs of size zero. The algorithm is required to have a good performance for any input satisfying the sorted order, where every such input is also called an input realization or realization. Thus, a solution is a partition of all positive integers (which serve as indexes of jobs) into $m$ sets, where the $i$th set is the subset of jobs assigned to machine $i$. We denote the $i$th set by $J_i$. Those sets are infinite but the length of the prefix of jobs of positive sizes is finite for every realization.
For a given realization, and a given solution, the load of machine $i$, $L_i$, is defined as $\sum_{j\in J_i} p_j$. The completion time for this solution is $\max_{1\leq i \leq m} L_i$. The goal is to minimize this cost or objective, that is, the goal is to minimize the completion time of the schedule. Ordinal algorithms for various problems (mainly scheduling problems) were studied in a number of articles
\cite{LSV96,liu96bp,liu1996ordinal,mahadev1998meaningfulness,tan01,yong2001semi,he02,tan2002ordinal,tan2004ordinal,tan05,ji2017ordinal,wang2019better,ELLMR22}.

This problem can be seen as a combinatorial problem of designing universal schedules. However, due to the relation to semi-online scheduling of sorted inputs (see below), and due to its algorithmic nature, we view the problem as an algorithmic one. We study ordinal algorithm as online (or semi-online) algorithms, and analyze them via the competitive ratio measure. The competitive ratio is the worst-case ratio between the cost of an ordinal solution and the cost an optimal offline solution, for the same realization. The cost for the standard variant of ordinal scheduling is the completion time of the solution.
We consider a fixed optimal offline solution, and assume that it always knows the realization, while an ordinal algorithm creates a solution without having this knowledge, and it has to create a schedule that is relatively good for all realizations. The work that is most relevant to our model is that of Liu et al. \cite{LSV96}, who studied the standard ordinal model defined above. In that study, upper and lower bounds were provided for the competitive ratio. For an arbitrary number of machines, the lower bound that was shown is $1.5$ (holding for any $m\geq 18$), and the upper bound is $\frac 53 \approx 1.66667$. More precisely, the upper bound is slightly smaller for every fixed value of $m$, and its supremum value is $\frac 53$. For $m=2,3,4$ machines, smaller bounds were shown. The upper bounds are $\frac 43 \approx 1.33333$, $1.4$, and $\frac{101}{70}\approx 1.4428571$, for $m=2$, $3$, and $4$, respectively. Lower bounds for $m=2,3$ were shown to be tight with the upper bounds, and a close lower bound of $\frac {23}{16}=1.4375$ on the competitive ratio was proved for $m=4$.

One idea of the algorithms used here, which was also used for example in \cite{LSV96,tan01,yong2001semi,he02,tan05,ji2017ordinal,wang2019better} (sometimes for other ordinal scheduling problems) is to assign different numbers of jobs to the different machines using a fixed pattern (but the numbers cannot be too different). The assignment of jobs of small indexes is more crucial than the assignment of jobs with large indexes, since the jobs presented later may be smaller. Moreover, if there is a large number of identical jobs,  the meaning is that none of the jobs is very large compared to the optimal offline completion time. As an example (which is based on the results of \cite{LSV96}), consider the case $m=2$, and the assignment of the first four jobs. If all those jobs have identical sizes, and the list of first four jobs is $<1,1,1,1>$, the assignment of three jobs to one machine leads to a competitive ratio of $1.5$. However, assigning two jobs to each machine leads to a competitive ratio of $\frac 43$ for the following list of sizes of the first four jobs $<3,1,1,1>$. In any algorithm with competitive ratio not exceeding $\frac 43$, the job that should be in the same set as job $1$ is job $4$, due to the following list of sizes for the first three jobs $<2,1,1>$, as otherwise, the competitive ratio is at least $1.5$. Thus, in order to obtain a competitive ratio of $\frac 43$, the subsets for the first four jobs are $\{1,4\}$ and $\{2,3\}$. The fifth job has to join the second set, due to the following list of sizes for the first five jobs: $<4,1,1,1,1>$. However, in our model which we explain below, in which the algorithm can construct two solutions, these arguments do not hold, and even an algorithm for two machines has to be designed very carefully, by finding the right balance between the two solutions.

We saw that even very simple realizations do not allow an algorithm to perform well. Here, we consider an approach that allows one to obtain a better performance for small numbers of machines, at the expense of using two solutions rather than one. A one-solution algorithm is an algorithm that maintains a single solution. A two-solution algorithm is an algorithm that maintains two solutions. One is allowed to use the better solution of the two for every input or realization. Thus, we define the makespan as the minimum completion time out of the two solutions. If the machines loads for one solution are denoted by $L^1_i$, and the machine loads for the other solution are denoted by $L^2_i$ (where the index $i$ is an index of a machine, and the loads are computed for a certain realization), the makespan is defined as $\min_{k=1,2}\{ \max_{1\leq i \leq m}L^k_i \}$.

Online problems with parallel solutions can be seen as an online problem with advice \cite{BKKKM17,BFKM18}, where the case of two solutions is equivalent to a single advice bit provided for the entire input.  There is work on problems with advice and parallel solutions for scheduling and related problems such as bin packing \cite{KKST97,Doh15,RRVS,BKLLO,Mik16,AH17_1,BKKKM17,ADKRR,BFKM18,FGK019}. It is typically hard to design good algorithms for just two solutions, in the sense that improving over the one-solution algorithm in a reasonably simple way often requires more than two solutions \cite{FGK019}. On the other hand, from a practical point of view, even if it is possible to maintain several solutions with the goal of using one of them (the best one) when the input is ended, this fixed number of solutions cannot be very large, and we believe that the case of two solutions is the most interesting one among studies of multiple solutions.


The difficulty in designing algorithms with multiple parallel solutions lies in the analysis. In the example above, it is obvious that by using two solutions, one can avoid the difficulty for very small inputs. Still, as we can see from our lower bound proofs, increasing input sizes slightly still leads to situations that none of the two solutions is perfect. For the upper bound proofs, we use the following method. For some machines we show that their loads are sufficiently small, no matter what the realization is. This analysis resembles the known results, and it relies on the sorted order. However, since a single solution cannot achieve the competitive ratios achieved here by two solutions, there are realizations for which one solution is not very good and other realizations for which the other solution is bad. We would like to show that at least one solution is good, and for that we analyze sums of loads (of the two solutions), or weighted sums. For small numbers of machines, the number of pairs to be considered will not be large.
The lower bounds constructions are obviously harder, since our algorithms perform well on very small inputs. In such proofs, it is also tricky to verify that indeed two solutions fail, but this needs to be done in order to obtain an understanding of the required action for the two solutions, in the algorithmic design.

Semi-online scheduling of jobs arriving over a list is a different though related online scheduling problem \cite{Gr69,SSW00,CKK}. In this problem, jobs arrive sorted by non-increasing sizes, but the algorithm knows the exact size for every arriving job. The objective is similar, but the difficulty of the online algorithm is in the fact that there is no information of future jobs (except for the sorted order). Every ordinal algorithm can obviously be used as a semi-online algorithm, but one can design better algorithms that use the actual sizes for their decisions. Graham \cite{Gr69} analyzed an algorithm called LPT (Longest Processing Time) which assigns each job in turn to the least loaded machine (breaking ties arbitrarily). He showed that the competitive ratio of LPT is $\frac {4m-1}{3m}=\frac 43-\frac{1}{3m}$. The problem was studied further by Seiden et al. \cite{SSW00}, who showed that LPT is the best possible semi-online algorithm for $m=2$, and proved a lower bound of $\frac{1 + \sqrt{37}}6 \approx 1.18046$ on the competitive ratio for three machines. Cheng et al. \cite{CKK} designed an algorithm of this last competitive ratio for three identical machines (which is a tight result), and an algorithm of competitive ratio at most $1.25$ for $m\geq 4$ machines. Thus, we can see that this variant is easier than the ordinal one. Here, we will show that even allowing the algorithm to use two solutions rather than one keeps the best competitive ratio higher for any $m$, when job sizes are not known in advance.

In this work, we study two-solution algorithms for small numbers of machines. We find tight bounds for $m=2,3$ and improved upper bounds on the competitive ratio for $m=4,5$, in the sense that they are better than the best possible bounds of one-solution ordinal algorithms, where the competitive ratios are at most $1.375$ and $1.4$, respectively. The tight bounds on the competitive ratio for $m=2,3$ are $1.25$ and $\frac 43$, respectively (this can be compared to the results $\frac 43$ and $1.4$, respectively, of a one-solution algorithm \cite{LSV96}). We show a lower bound of $\frac 43$ on the competitive ratio not only for $m=3$, but also for any $m\geq 4$. The one-solution known upper bounds for $m=4,5$ \cite{LSV96} are larger than our bounds (the bound is $1.5$ for $m=5$). A lower bound of $1.4375$ was known for a one-solution algorithm for $m=4$ (and this is higher than our upper bound of $1.375$ of a two-solution algorithm), but the known lower bound on the competitive ratio for $m=5$ was approximately $1.3704$ \cite{LSV96}. Here, we improve the lower bounds on the competitive ratio of a one-solution algorithm, and show that it is strictly above $1.4$ (the lower bound value that we show is $1.448598$). In fact, we find improved lower bounds on the competitive ratio of a one-solution algorithm for all values $5\leq m \leq 17$.

As another motivation for our work, we show in Section \ref{premi} that increasing the number of solutions will not allow us to design an algorithm that outputs an optimal solution for every realization. That is, we show that for every finite set of solutions, there exists an input realization such that none of the solutions creates an optimal schedule. This supports our thesis that the number of used solutions should be small both from a theoretical point of view and from a real-life perspective.

\section{Preliminaries}\label{premi}
In this section we show that increasing the number of solutions will not allow us to reach an optimal solution for $m=2$. We also define the notation, and discuss the approach of our proofs, and constraints on optimal solutions.

We start with the lower bound for a fixed number of solutions, $M$. For $M=2$, which is the topic of this work, we show an improved (and tight) lower bound later.

\begin{theorem}
For any constant number of parallel solutions, $M$, and $m=2$, the competitive ratio for the minimization problem  is strictly above $1$, and it is $1+\Omega(\frac{1}{M^2})$.
\end{theorem}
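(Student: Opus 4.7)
The plan is to exhibit a one-parameter family of adversarial realizations and apply pigeonhole. For positive integers $n$ and $k\in\{1,\ldots,n-1\}$, let $R_{n,k}:=(k,1,1,\ldots,1)$ consist of one ``big'' job of size $k$ followed by $n-1$ unit jobs. A direct check shows $\opt(R_{n,k})=\lceil(n+k-1)/2\rceil$, achieved by isolating job~$1$ together with roughly $(n-1-k)/2$ units. For any solution $S$, let $f_S(n)$ denote the number of jobs in $\{1,\ldots,n\}$ that $S$ places on the same machine as job~$1$; then the makespan of $S$ on $R_{n,k}$ equals $\max(k+f_S(n)-1,\ n-f_S(n))$, and a short rearrangement gives
\[
\mathrm{makespan}_S(R_{n,k})-\opt(R_{n,k})\ \ge\ \Bigl|f_S(n)-\tfrac{n-k+1}{2}\Bigr|-\tfrac{1}{2}.
\]
Consequently $S$ can match $\opt$ on $R_{n,k}$ only when $k$ lies within a bounded integer window around the single target $k^\star_S:=n+1-2f_S(n)$.

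Next, I would fix $n=\Theta(M)$ (for concreteness $n=8M$) and let the adversary choose $k\in\{1,\ldots,n-1\}$. The $M$ given solutions contribute at most $M$ integer targets $k^\star_S$; by pigeonhole over the $\Theta(M)$ integer candidates for $k$, there is a $k$ (chosen of the parity opposite to $n+1$, so that the discrepancy above is strictly positive rather than half-integer-cancelling) with $|k-k^\star_S|\ge 3$ for every $S$. The displayed inequality then yields $\mathrm{makespan}_S(R_{n,k})\ge\opt(R_{n,k})+1$ for every $S$, and since $\opt(R_{n,k})=\Theta(M)$ this produces a competitive ratio of at least $1+\Omega(1/M)$, which is in particular $1+\Omega(1/M^2)$ as claimed.

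The main technical obstacle is the parity bookkeeping: when $n+k$ is even the target $(n-k+1)/2$ is a half-integer, so two consecutive integer values of $f_S(n)$ are simultaneously optimal---this is exactly why the displayed inequality carries the $-\tfrac12$ slack. Tracking this slack correctly is what allows one to select a ``bad'' $k$ forcing every solution to overshoot $\opt$ by a full unit rather than merely $1/2$, and thereby to convert pigeonhole into a quantitative lower bound. A secondary routine check is verifying that the selected $k$ satisfies $k\le n-1$, so that $\opt(R_{n,k})$ is governed by the balanced value $\lceil(n+k-1)/2\rceil$ rather than by the size of the single big job; this is why the family is restricted to $k\in\{1,\ldots,n-1\}$.
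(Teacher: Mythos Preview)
Your approach is correct and in fact yields a stronger conclusion than the paper's own proof. The paper classifies each solution by its \emph{type} (the least index $k\ge 2$ placed on a different machine than job~$1$), uses pigeonhole on types in $\{3,\ldots,M+3\}$ to find a missing type $i$, and then feeds the input consisting of $i-1$ jobs of size $i$ followed by $i$ jobs of size $i-1$; a short arithmetic check shows no solution can balance this exactly, giving ratio at least $1+\frac{1}{(M+2)(M+3)}=1+\Omega(1/M^2)$. Your construction is more elementary---one large job plus units---and reduces each solution to the single integer $f_S(n)$, after which a pigeonhole over the parameter $k$ (rather than over types) finishes. Since your $\opt$ is $\Theta(M)$ and each solution blocks only $O(1)$ values of $k$, you actually obtain $1+\Omega(1/M)$, strictly improving the bound in the theorem.

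One small wrinkle worth flagging: your parenthetical ``parity opposite to $n+1$'' selects precisely the case $n+k$ even, in which $(n-k+1)/2$ is a half-integer and the $-\tfrac12$ slack in your displayed inequality is tight---so this parity choice does \emph{not} by itself force strict positivity, contrary to the phrasing. Fortunately the argument does not need the parity restriction at all: with $n=8M$ there are $8M-1$ candidates for $k$, each of the $M$ targets $k^\star_S$ excludes at most five of them (those within distance~$2$), and $8M-1-5M\ge 1$ candidates remain with $|k-k^\star_S|\ge 3$ for every $S$. That already gives $|f_S(n)-(n-k+1)/2|\ge 3/2$ and hence $\mathrm{makespan}_S\ge\opt+1$ regardless of parity. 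So the core pigeonhole is clean; the parity remark is superfluous and slightly misstated but harmless to the argument.
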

\begin{proof}
We say that a solution has type $k$ if $k\geq 2$ is the minimum integer such that job $k$ is not in the same set as job $1$. For example, if for the first ten jobs the subsets for the two machines are $\{1,2,3,6,7,10\}$ and $\{4,5,8,9\}$, we have $k=4$. Given $M$ solutions, by the pigeonhole principle, there is an integer $i$ in $\{3,\ldots,M+3\}$ such that there is no solution of type $i$ among the $M$ solutions (and we select such a value $i$).
Consider the following input that is based on the value of $i$.

The first $i-1$ jobs have sizes of $i$, and the next $i$ jobs have sizes of $i-1 \geq 2$. The remaining jobs have sizes of $0$ (so only $2i-1$ jobs have positive sizes). For this input, an optimal offline solution assigns the jobs of size $i$ to one machine and the jobs of size $i-1$ to the other machine, for a completion time of $i(i-1)$.

Consider all $M$ solutions, where the solutions are partitioned into solutions whose types are strictly larger than $i$ and types that are strictly smaller than $i$. For a solution of type $a\geq i+1$, the first $i$ jobs are in one set, and the maximum completion time is at least $(i-1)\cdot i + (i-1) = i^2 -1$.

For a solution of type $a \leq i-1$, the $a$th job is not in the same subset as the first job, so every machine has at least one job of size $i$ for the realization defined here. Let $\alpha$ and $\beta$ be the numbers of jobs of the two sizes $i$ and $i-1$ assigned to the first machine in this realization, that is, $\alpha$ is the number of jobs among the first $i-1$ jobs assigned by this solution to the first machine, and $\beta$ is the number of jobs assigned to the first machine among the next $i$ jobs. We know that $1\leq \alpha \leq i-2$, and $0\leq \beta \leq i$.

The only non-negative integer solutions $\alpha$ and $0\leq \beta \leq i$  of $\alpha\cdot i + \beta\cdot (i-1)=i(i-1)$ are $\alpha=i-1$, $\beta=0$ and $\alpha=0$, $\beta=i$. However, we assume that $1 \leq \alpha \leq i-2$, so are no such integers $\alpha$, $\beta$, and thus the completion time of the first machine is not $i(i-1)$. Therefore, for every solution among the $M$ solutions, the two machines have different completion times, and the maximum completion time is at least $i(i-1)+1$. Thus, by $i^2-1 \geq i^2-i+1$ for $i\geq 2$, we find that the competitive ratio is at least $\frac{i(i-1)+1}{i(i-1)}=1+\frac{1}{i(i-1)} \geq 1+\frac1{(M+2)(M+3)}$ (since $i\leq M+3$).
\end{proof}

We will always use $W$ to denote the total size of all jobs in the realization of the input. Thus, $W=\sum_{j=1}^{\infty} p_j$.
We use $\lambda$ to denote the optimal offline completion time for a specific realization. By averaging, we have $\lambda \geq \frac{W}m$, or alternatively, $W \leq m \cdot \lambda$.
We have $\lambda \geq p_1$, and since $p_1 \geq p_j$ holds for any $j\geq 2$, we also have $\lambda \geq p_j $ for any $j\geq 1$ (for example, we may use the property $\lambda \geq p_2$). Additionally, every solution assigns at least two jobs out of the first $m+1$ jobs to the same machine. Thus, $\lambda \geq p_m+p_{m+1}$, which implies $p_j \leq \frac{\lambda}2$ for $j\geq m+1$. We use $n$ to denote the number of jobs in a realization, in the sense that there are at most $n$ jobs whose sizes are positive.  For any $j\geq 0$, let $P_j=\sum_{\ell=1}^j p_j$ (so $P_0=0$), and for $j\geq 1$, $Q_j=\sum_{\ell=j}^n p_j$ (so $Q_n=p_n$ and $W=P_n=Q_1=P_j+Q_j-p_j$ for any $1\leq j \leq n$). Obviously, $P_j \geq j \cdot p_j$ since $p_1 \geq p_2 \ldots \geq p_j$.

For our two solutions, we will denote the loads of machine $i$ (for $1\leq i \leq m$) by $C_i$ and $L_i$.
In the analysis, we sometimes use expressions of the form $\ell\cdot C_i + \sum_{j\in J} t_j\cdot p_j$ or $\ell\cdot L_i + \sum_{j\in J} t_j\cdot p_j$, where the values $t_j$ are integers (which may be positive or negative or equal to zero), and $\ell$ is a positive integer.


Our methodology for proving a upper bound of $R$ for the competitive ratio of a two-solution algorithm is as follows. Given two solutions, the first step is to find an upper bound of $R \cdot \lambda$ for $C_{i_1}$ for a large set of integers $i_1$, and to find an upper bound of $R \cdot \lambda$ for $L_{i_2}$ for a large set of integers $i_2$. Denoting those sets by $I_1,I_2 \subseteq \{1,2,\ldots,m\}$, respectively, we are left with the complement sets $U_1, U_2$ (where $U_1 \cup I_1 = U_2 \cup I_2 = \{1,2,\ldots,m\}$, and $U_q \cap I_q =\emptyset$ for $q=1,2$. Then, for every pair $i'_1 \in U_1$, $i'_2\in U_2$, we show that there exists a linear combination with two integers $k_1,k_2 \geq 1$, such that $k_1\cdot C_{i'_1}+ k_2\cdot C_{i'_2} \leq (k_1+k_2)\cdot R \cdot \lambda$. The pair of values $k_1,k_2$ may be different for different pairs of machines. In this way, using the pigeonhole principle, we will find that it either holds $\max_{i'_1 \in U_1} C_{i'_1} \leq R \cdot \lambda$ or $\max_{i'_2 \in U_2} C_{i'_2} \leq R \cdot \lambda$ (or both). Since the loads of machines in $I_1$ do not exceed $R \cdot \lambda$ for the first solution, and the loads of machines in $I_2$ do not exceed $R \cdot \lambda$ for the second solution, we will find that for any input, at least one of the two solutions satisfies the competitive ratio. Obviously, the identity of the better solution is based on the realization of the input.

Consider a set of jobs whose indexes are of the form $\alpha \cdot k + \beta$, that consists exactly of all such jobs, for all $k\geq \gamma$, where $\alpha,\beta,\gamma$ are fixed integers, $0 \leq \beta < \alpha$, and $\alpha , \gamma \geq 1$. For example, if $\alpha=5$, $\beta=2$, and $\gamma=1$, the job set is $\{7,12,17,22,\ldots\}$. It is required that $\alpha\cdot \gamma +\beta - \alpha \geq 0$ will be valid, which holds by $\alpha\cdot \gamma +\beta - \alpha \geq \alpha+\beta-\alpha \geq 0$.

\begin{lemma}\label{theW}
For the set of jobs $\{\alpha \cdot \gamma + \beta, \alpha \cdot (\gamma+1) + \beta, \ldots\}$, the total size is at most $\frac{W-P_{\alpha \cdot (\gamma-1) + \beta}}{\alpha}$.
\end{lemma}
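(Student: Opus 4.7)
The plan is to partition the ``tail'' indices $\{\alpha(\gamma-1)+\beta+1,\alpha(\gamma-1)+\beta+2,\ldots\}$ into consecutive blocks of length $\alpha$, where the $k$th block (for $k\geq\gamma$) consists of the indices from $\alpha(k-1)+\beta+1$ up to and including $\alpha k+\beta$. The distinguished job in the set under consideration, namely $\alpha k+\beta$, is then the last (hence smallest) job in its block, and by the non-increasing sorted order all $\alpha$ jobs in that block have size at least $p_{\alpha k+\beta}$.

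Once this blocking is set up, the main inequality is immediate: for every $k\geq\gamma$,
\[
\alpha\cdot p_{\alpha k+\beta}\;\leq\;\sum_{j=\alpha(k-1)+\beta+1}^{\alpha k+\beta} p_j.
\]
Summing over all $k\geq\gamma$, the right-hand sides telescope (they are disjoint and cover all integers $j\geq\alpha(\gamma-1)+\beta+1$), giving
\[
\alpha\sum_{k\geq\gamma} p_{\alpha k+\beta}\;\leq\;\sum_{j\geq\alpha(\gamma-1)+\beta+1} p_j\;=\;W-P_{\alpha(\gamma-1)+\beta}.
\]
Dividing by $\alpha$ yields the claim.

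Before doing this one needs to check that the blocking is well defined, which is exactly where the hypothesis $\alpha\gamma+\beta-\alpha\geq 0$ enters: it guarantees $\alpha(\gamma-1)+\beta+1\geq 1$, so $P_{\alpha(\gamma-1)+\beta}$ refers to a legitimate (possibly empty when $\gamma=1$ and $\beta=0$) prefix, and every block index involved is a positive integer. There is no real obstacle here; the statement is essentially a sorted-sequence averaging bound, and the only thing to be careful about is keeping the index arithmetic consistent, in particular that the last index of block $k$ matches the first index of block $k+1$ minus one, so that no job is double-counted or skipped when the block sums are combined into the single tail sum $W-P_{\alpha(\gamma-1)+\beta}$.
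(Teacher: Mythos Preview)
Your proof is correct and follows essentially the same approach as the paper's: both partition the tail of indices starting at $\alpha(\gamma-1)+\beta+1$ into consecutive blocks of length $\alpha$, bound each block from below by $\alpha\cdot p_{\alpha k+\beta}$ via the sorted order, and sum. The only cosmetic difference is that the paper explicitly truncates at the last relevant index $N=\alpha k'+\beta\leq n$, whereas you treat the sums as infinite (which is fine under the paper's convention that $p_j=0$ for $j>n$).
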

\begin{proof}
By $\alpha \cdot (\gamma-1)+\beta \geq 0$, the expression in the statement of the lemma is well-defined.
Consider the suffix of jobs, starting with job $\alpha \cdot (\gamma-1)+\beta+1$. The total size of these jobs is $Q_{\alpha \cdot (\gamma-1)+\beta+1}=W-P_{\alpha \cdot (\gamma-1)+\beta}$. We have $Q_{\alpha \cdot (\gamma-1)+\beta+1}=\sum_{j=\alpha \cdot (\gamma-1)+\beta+1}^n p_j$. Let $N\leq n$ be the largest value of the form $\alpha \cdot k + \beta$ for any integer $k$, where
$N=\alpha \cdot k' + \beta$. It holds that $Q_{\alpha \cdot (\gamma-1)+\beta+1} \geq \sum_{j=\alpha \cdot (\gamma-1)+\beta+1}^N p_j=\sum_{\kappa=\gamma}^{k'} \sum_{a=1}^{\alpha} p_{\alpha\cdot (\kappa-1)+a+\beta} \geq \sum_{\kappa=\gamma}^{k'} \sum_{a=1}^{\alpha} p_{\alpha\cdot (\kappa-1)+\alpha+\beta}=\alpha \cdot \sum_{\kappa=\gamma}^{k'} p_{\alpha\cdot \kappa+\beta}$. The last expression is $\alpha$ times the total size we are interested in.
\end{proof}

\section{Algorithms with two solutions}
In this section we design two-solution algorithms for $m=2,3,4,5$. For each case, we define the two solutions using disjoint sets whose union is the set of positive intgers.
Below, we define four algorithms with two solutions each. The two solutions for each one of the cases are also shown in Figures \ref{ff1} and \ref{ff2}.

{\begin{figure}[h!]
\centerline{{\epsfig{file=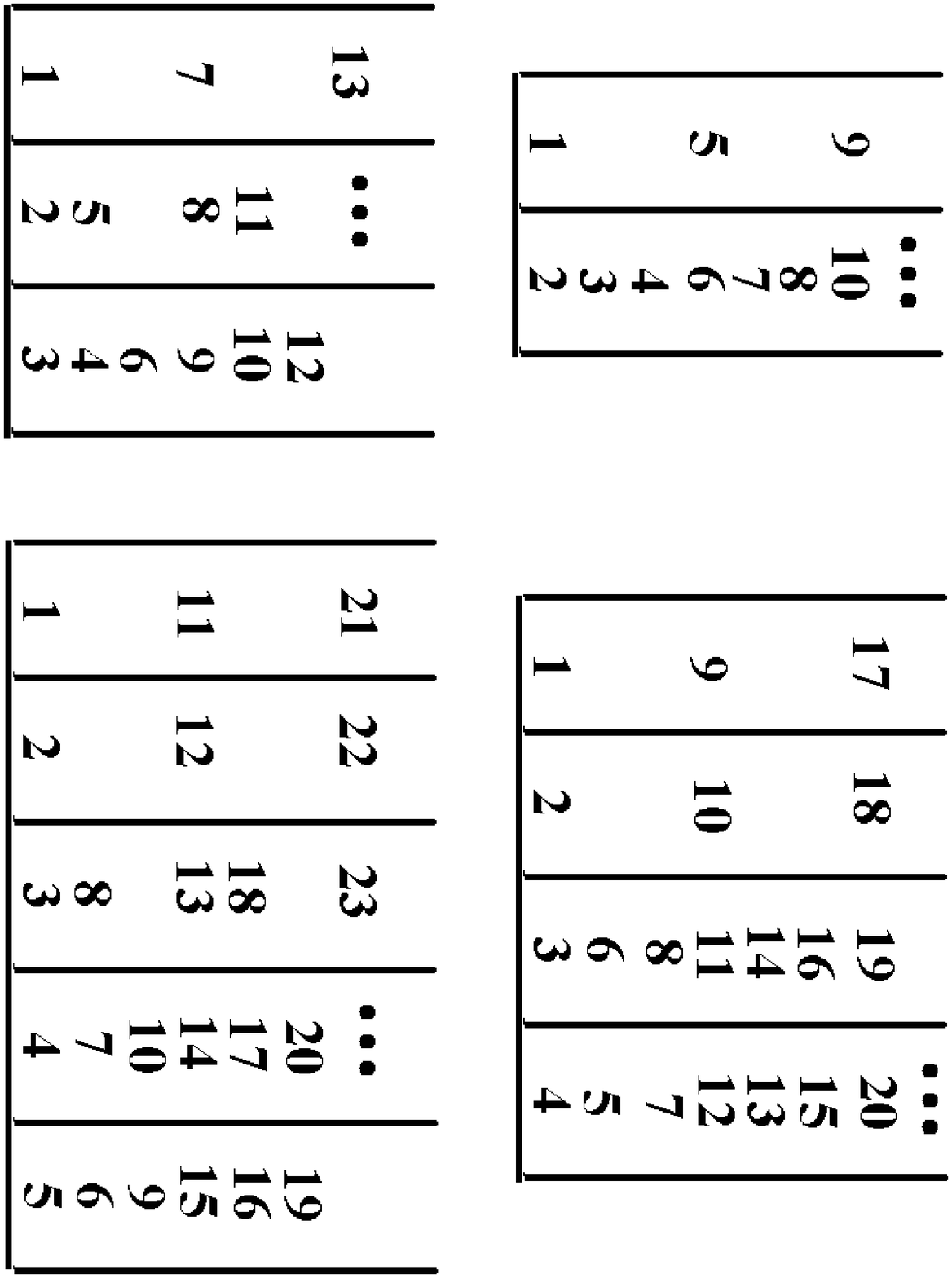, scale=0.33,angle=90}}}
\caption{The assignment of the first few jobs for the first solution for each number of machines. The top left solution is for two machines, the bottom left solution is for three machines, the top right solution is for four machines, and the bottom right solution is for five machines.}
\label{ff1}
\end{figure}}

{\begin{figure}[h!]
\centerline{{\epsfig{file=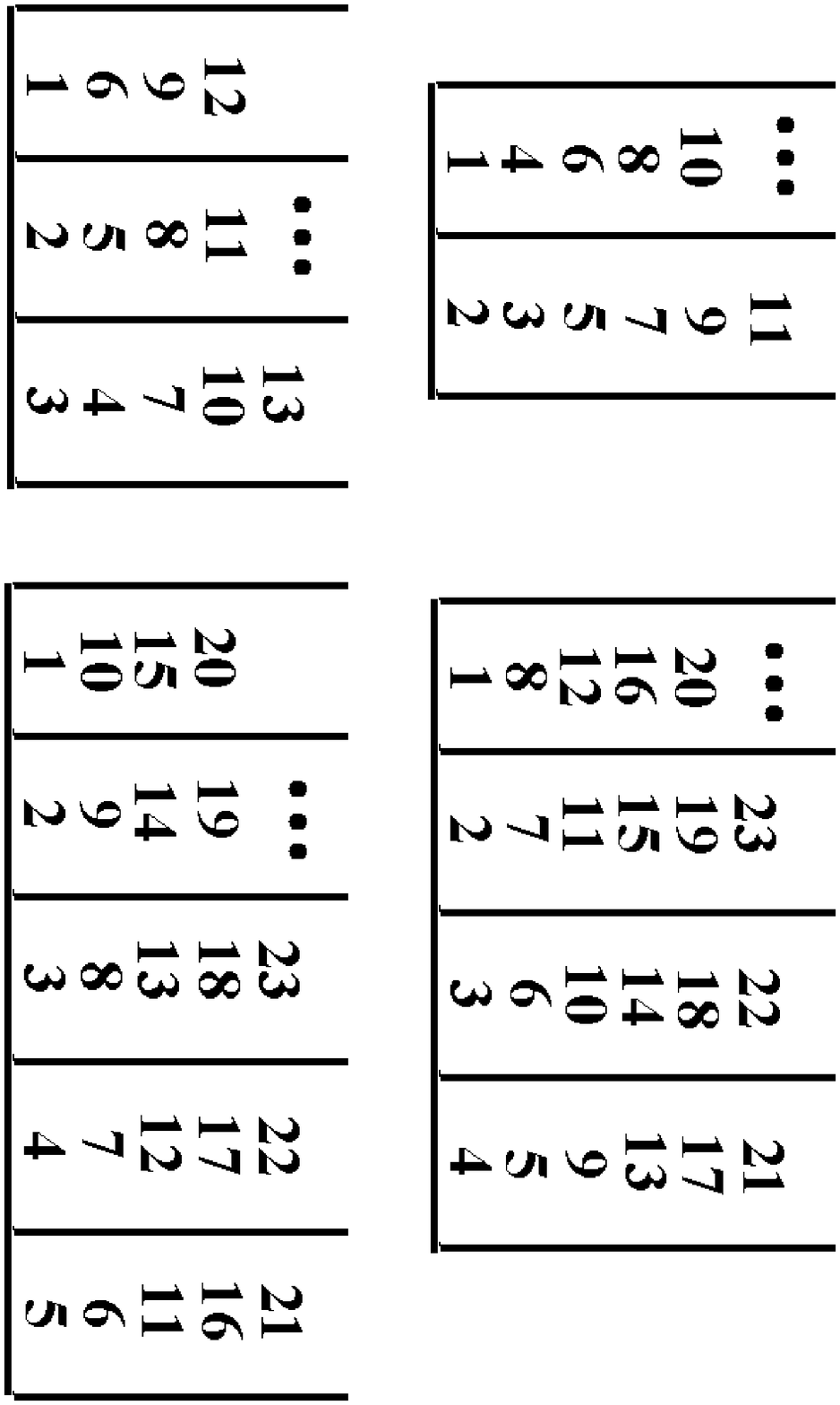, scale=0.33,angle=90}}}
\caption{The assignment of the first few jobs for the second solution for each number of machines. The positions of solutions are as in Figure \ref{ff1}.}
\label{ff2}
\end{figure}}

\subsection{Two machines}
We define two solutions, and show that the algorithm that uses them has competitive ratio at most $1.25$. The first solution has a set with all jobs of index $j$ such that $j  \equiv 1\bmod 4$, and the second set consists of all other jobs (every $j$ for which it holds that $j \bmod 4 \in \{0,2,3\}$). That is, one set is $\{1,5,9,13,\ldots\}$ and the other set is $\{2,3,4,6,7,8,\ldots\}$. The second solution has the following subsets. The first set consists of $1$ and all even indexes excluding $2$, while the second set has $2$ and all odd indexes excluding $1$. That is, the sets are $\{1,4,6,8,10,\ldots\}$ and $\{2,3,5,7,9,11,\ldots\}$.

\begin{theorem}
The competitive ratio of this algorithm for $m=2$ is at most $1.25$.
\end{theorem}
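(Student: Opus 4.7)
The plan is to follow the methodology described in Section~\ref{premi}: show that for every realization at least one of the two solutions has makespan at most $\frac{5\lambda}{4}$ by (a) bounding the loads $C_1$ and $L_2$ individually by $\frac{5\lambda}{4}$, and (b) bounding the complementary pair through a single sum inequality $C_2+L_1 \leq \frac{5\lambda}{2}$, which forces $\min\{C_2,L_1\}\leq \frac{5\lambda}{4}$. Once both pieces are in place, if $C_2\leq \frac{5\lambda}{4}$ the first solution achieves the required ratio, and otherwise $L_1\leq \frac{5\lambda}{4}$ and the second solution does.

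For (a) I would apply Lemma~\ref{theW} twice. Writing $C_1 = p_1+\sum_{k\geq 1}p_{4k+1}$, the lemma with $\alpha=4,\beta=1,\gamma=1$ bounds the tail by $\frac{W-p_1}{4}$, giving $C_1\leq \frac{3p_1+W}{4}\leq \frac{3\lambda+2\lambda}{4}=\frac{5\lambda}{4}$. Writing $L_2 = p_2+p_3+\sum_{k\geq 2}p_{2k+1}$, the lemma with $\alpha=2,\beta=1,\gamma=2$ gives $L_2\leq p_2+p_3+\frac{W-p_1-p_2-p_3}{2}=\frac{p_2-p_1+p_3+W}{2}$. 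Using $p_2\leq p_1$ and the preliminary bound $p_3\leq \frac{\lambda}{2}$ (valid since $j\geq m+1=3$), this simplifies to at most $\frac{p_3+W}{2}\leq \frac{\lambda/2+2\lambda}{2}=\frac{5\lambda}{4}$.

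For (b) the identity to exploit is $C_2+L_1=(W-C_1)+L_1 = W + (L_1-C_1)$, so it suffices to show $L_1-C_1\leq \frac{W}{4}$. I would expand the difference by grouping indices modulo $4$: only the common index $1$ cancels, leaving $L_1-C_1 = \sum_{k\geq 1}\bigl((p_{4k}-p_{4k+1})+p_{4k+2}\bigr)$, where the $p_{4k}$ and $p_{4k+2}$ terms come from $L_1\setminus C_1$ (the even indices at least $4$) and the $p_{4k+1}$ terms from $C_1\setminus L_1$. The key step is the sorted-order inequality $p_{4k+1}\geq p_{4k+2}$, which gives $(p_{4k}-p_{4k+1})+p_{4k+2}\leq p_{4k}$ for each $k\geq 1$. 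Lemma~\ref{theW} with $\alpha=4,\beta=0,\gamma=1$ then yields $\sum_{k\geq 1}p_{4k}\leq \frac{W}{4}$, so $C_2+L_1\leq W+\frac{W}{4}=\frac{5W}{4}\leq \frac{5\lambda}{2}$.

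The delicate point is part (b): the naive estimates $C_2\leq W-p_1$ and $L_1\leq \frac{p_1+W-p_2}{2}$ sum only to $\frac{3W-p_1-p_2}{2}$, which can strictly exceed $\frac{5\lambda}{2}$ (for example with six equal jobs of size $\frac{\lambda}{3}$, where it gives $\frac{8\lambda}{3}$), so some cancellation is essential. The rewriting $C_2+L_1 = W+(L_1-C_1)$ lets the $p_{4k+1}$ contributions in $C_1$ offset the $p_{4k+2}$ contributions in $L_1$, leaving only the terms $p_{4k}$ to which Lemma~\ref{theW} applies cleanly with a factor of $\frac{1}{4}$, and gives exactly the slack needed for the bound $\frac{5W}{4}\leq \frac{5\lambda}{2}$.
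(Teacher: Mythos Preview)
Your proof is correct and follows essentially the same overall approach as the paper: bound $C_1$ and $L_2$ individually by $\tfrac{5\lambda}{4}$, then show $C_2+L_1\le \tfrac{5W}{4}\le \tfrac{5\lambda}{2}$ for the remaining pair. The only difference is in how you establish the pair inequality in part~(b): the paper works directly with the multiset of indices in $C_2\cup L_1$ (four copies, then a termwise replacement to five copies of every index), whereas you substitute $C_2=W-C_1$ and bound $L_1-C_1=\sum_{k\ge 1}(p_{4k}-p_{4k+1}+p_{4k+2})\le\sum_{k\ge1}p_{4k}\le\tfrac{W}{4}$. Both routes yield the same inequality; your substitution is a clean way to isolate exactly the cancellation that makes the bound tight.
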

\begin{proof}
Recall that $C_1$ and $C_2$ are the total sizes of jobs for the first solution, and $L_1$ and $L_2$ are the total sizes of jobs for the second solution.
We consider $C_1$ and $L_2$, and afterwards we consider $C_2+L_1$. We will show the three inequalities: $C_1 \leq 1.25 \cdot \lambda$, $L_2 \leq 1.25 \cdot \lambda$, and $L_1+ C_2 \leq 2.5 \cdot \lambda$.

We have $C_1=\sum_{i=1}^{\infty} p_{4i-3}$.
Taking four occurrences of each job of this machine, we have the multi-set $\{1,1,1,1,5,5,5,5,9,9,9,9,\ldots\}$, whose total size is not smaller than that of the multi-set $\{1,1,1,1,2,3,4,5,6,7,8,9,\ldots\}$, by $p_{4i-3}\geq p_{4i-4} \geq p_{4i-5}\geq p_{4i-6}$ for $i\geq 2$.
Thus, $4\cdot C_1 -3\cdot p_1 \leq W$. (This can also be derived from applying Lemma \ref{theW} with $\beta=1$, $\alpha=4$, and $\gamma=1$).
By $p_1 \leq \lambda$ and $W \leq 2\lambda$, we have $$C_1 \leq \frac{W+3\cdot p_1}4 \leq \frac{2\lambda+3\lambda}4 =1.25 \cdot \lambda \ .$$

We have $L_2=p_2+p_3+\sum_{i=3}^{\infty} p_{2i-1}$.
The multi-set of jobs that corresponds to taking two occurrences of each job of $L_2$, of total size $2\cdot L_2$, is $\{2,2,3,3,5,5,7,7,9,9,\ldots\}$, and this total size does not exceed the total size of $\{2,2,3,3,4,5,6,7,8,9,\ldots\}$,  which is at most $W+p_2+p_3-p_1$, by using $p_{2i-1}\geq p_{2i-2}$ for $i\geq 3$. (Those arguments are simple, usually follow from Lemma \ref{theW} easily, and in what follows, we will not provide the full details in such arguments.) Thus, $2\cdot L_2 - p_3 - p_2 +p_1\leq W$.
By $p_3 \leq \frac{\lambda}2$, $p_1 \geq p_2$, and $W \leq 2\lambda$, we have $$L_2 \leq \frac{W+p_3+p_2-p_1}2 \leq \frac{W+p_3}2\leq \frac{2\lambda+\lambda/2}2 =1.25 \cdot \lambda \ .$$

Finally, consider  $4\cdot C_2+4 \cdot L_1=4\sum_{i=1}^{\infty} (p_{4i-2}+p_{4i-1}+p_{4i})+4\cdot p_1+4\sum_{i=2}^{\infty} p_{2i}$.
The multi-set for the two machines is
$$\{2,3,4,6,7,8,\ldots\}\cup\{1,4,6,8,10,\ldots\}=\{1,2,3,4,4,6,6,7,8,8,10,10,11,12,12,\ldots\} \ , $$ and by taking four occurrences,  we have $$\{1,1,1,1,2,2,2,2,3,3,3,3,4,4,4,4,4,4,4,4,6,6,6,6,6,6,6,6,7,7,7,7,8,8,8,8,8,8,8,8,$$ $$10,10,10,10,10,10,10,10,11,11,11,11,12,12,12,12,12,12,12,12,\ldots\} \ .$$
by replacing jobs by jobs that are not smaller, we have
$$\{1,1,1,1,1,2,2,2,2,2,3,3,3,3,3,4,4,4,4,4,5,5,5,5,5,6,6,6,6,6,7,7,7,7,7,8,8,8,8,8,$$ $$9,9,9,9,9,10,10,10,10,10,11,11,11,11,11,12,12,12,12,12,\ldots\} \ .$$
We get $4C_2+4L_1 \leq 5 W \leq 10 \cdot \lambda$.
\end{proof}
\subsection{Three machines}

We define two solutions, and show that the algorithm that uses them has competitive ratio at most $\frac 43$.

The first solution is defined as follows. Machine $1$ has the following jobs: $\{6k+1: k\geq 0\}$, that is, the set is $\{1,7,13,\ldots\}$. Machine $2$ has the following jobs: $\{6k+2: k\geq 0\}\cup \{6k+5: k\geq 0\}$, that is, the set is $\{2,5,8,11,14,17,20,\ldots\}$. Machine $3$ has the remaining jobs. Those are $\{6k+3: k\geq 0\}\cup \{6k+4: k\geq 0\}\cup \{6k+6: k\geq 0\}$, that is, the set is $\{3,4,6,9,10,12,15,\ldots\}$.

The second solution is defined as follows. Machine $i$ (for $1\leq i \leq 3$) has job $i$, and all jobs such that $j\geq 4$ and $(j+i) \bmod 3  = 1$. Thus, the set of machine $1$ is $\{1\}\cup\{3k: k\geq 2\}=\{1,6,9,12,\ldots\}$, the set of machine $2$ is $\{3k-1: k\geq 1\}=\{2,5,8,11,14,\ldots\}$, the set of machine $3$ is $\{3\}\cup\{3k-2: k\geq 2\}=\{3,4,7,10,13,\ldots\}$.

\begin{theorem}
The competitive ratio of this algorithm for $m=3$ is at most $\frac 43$.
\end{theorem}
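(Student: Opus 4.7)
I will apply the pairs-of-machines methodology of Section \ref{premi}. First I will show that $C_1, C_2 \leq \tfrac{4}{3}\lambda$ always in Solution 1 and $L_2, L_3 \leq \tfrac{4}{3}\lambda$ always in Solution 2, so that the safe sets are $I_1 = \{1, 2\}$ and $I_2 = \{2, 3\}$. Only the pair $(i'_1, i'_2) = (3, 1)$ remains, and I will handle it with weights $k_1 = k_2 = 1$ by proving $C_3 + L_1 \leq \tfrac{8}{3}\lambda$, which forces $\min(C_3, L_1) \leq \tfrac{4}{3}\lambda$. Combined with the individual bounds, this yields the claimed competitive ratio.

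The four single-machine bounds are routine applications of Lemma \ref{theW}. For $C_1 = p_1 + \sum_{k \geq 1} p_{6k+1}$, the lemma with $(\alpha, \beta, \gamma) = (6, 1, 1)$ together with $p_1 \leq \lambda$ and $W \leq 3\lambda$ gives $C_1 \leq \tfrac{5 p_1 + W}{6} \leq \tfrac{4\lambda}{3}$. For $C_2 = L_2$, the lemma with $(3, 2, 1)$ together with $2 p_2 \leq 2 p_1$ gives $C_2 \leq \tfrac{W + p_1}{3} \leq \tfrac{4\lambda}{3}$. For $L_3 = p_3 + p_4 + \sum_{k \geq 2} p_{3k+1}$, the lemma with $(3, 1, 2)$, then $2 p_3 \leq p_1 + p_2$ and $p_4 \leq \tfrac{\lambda}{2}$, gives $L_3 \leq \tfrac{W + 2 p_4}{3} \leq \tfrac{4\lambda}{3}$.

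For the pair bound, for each $k \geq 1$ the block $\{6k+1, \ldots, 6k+6\}$ contributes exactly $2 p_{6k+3} + p_{6k+4} + 2 p_{6k+6}$ to $C_3 + L_1$, while the initial block contributes $p_1 + p_3 + p_4 + 2 p_6$. The key inequality
\[
7 p_{6k+3} + p_{6k+4} + 7 p_{6k+6} \leq 5 p_{6k+1} + 5 p_{6k+2} + 5 p_{6k+5}
\]
(equivalently, the $k$-th block's contribution is at most $\tfrac{5}{6}$ of the block sum) follows from the sorted order by allocating $7 p_{6k+3} \leq 5 p_{6k+1} + 2 p_{6k+2}$, $p_{6k+4} \leq p_{6k+2}$, and $7 p_{6k+6} \leq 2 p_{6k+2} + 5 p_{6k+5}$. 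Summing over $k$ yields $C_3 + L_1 \leq \tfrac{5W}{6} + \tfrac{1}{6}(p_1 - 5 p_2 + p_3 + p_4 - 5 p_5 + 7 p_6)$. I will bound the residual by $\lambda$ via the telescope $-5 p_2 + p_3 \leq -4 p_3$, $-4 p_3 + p_4 \leq -3 p_4$, $-3 p_4 - 5 p_5 \leq -8 p_5$, $-8 p_5 + 7 p_6 \leq -p_6 \leq 0$, leaving only $p_1 \leq \lambda$; combined with $W \leq 3\lambda$ this gives $C_3 + L_1 \leq \tfrac{8\lambda}{3}$.

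The main obstacle is the block inequality, which is tight because the coefficient pattern $(0, 0, 2, 1, 0, 2)$ places its largest weights exactly on the smaller jobs in each block, so the sorted order provides just enough slack. The initial block, whose pattern $(1, 0, 1, 1, 0, 2)$ over-weights $p_1$, requires the delicate telescope above to reduce the residual to a single $p_1 \leq \lambda$ term; any coarser bound would leave excess that the $\tfrac{\lambda}{6}$ of slack cannot absorb.
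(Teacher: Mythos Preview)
Your proof is correct and follows essentially the same approach as the paper: you establish the same four individual bounds $C_1,C_2,L_2,L_3\le \tfrac43\lambda$ (via Lemma~\ref{theW}) and then bound $C_3+L_1\le \tfrac83\lambda$ by a block-of-six argument. The only cosmetic difference is that the paper presents the pair bound as a multi-set replacement and obtains the slightly sharper residual $p_1-p_2$ (versus your $p_1$), but both residuals are at most $\lambda$, so the conclusion $6C_3+6L_1\le 5W+\lambda\le 16\lambda$ is identical.
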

\begin{proof}
We have that  $C_i$, for $i=1,2,3$, and $L_i$ for $i=1,2,3$ are the total sizes of jobs for the first and second solutions, respectively. That is,
$C_1=\sum_{i=1}^{\infty} p_{6i-5}$, $C_2=\sum_{i=1}^{\infty} (p_{6i-4}+p_{6i-1})$, $C_3=\sum_{i=1}^{\infty} (p_{6i-3}+p_{6i-2}+p_{6i})$,
$L_1=p_1+\sum_{i=2}^{\infty} p_{3i}$, $L_2=p_2+\sum_{i=2}^{\infty} p_{3i-1}$, and $L_3=p_3+\sum_{i=2}^{\infty} p_{3i-2}$.
We have $C_2=L_2$, and we will show the following three inequalities: $C_1, C_2 \leq \frac 43 \cdot \lambda$, $L_2, L_3 \leq \frac 43 \cdot \lambda$, and $6\cdot C_3 +6\cdot L_1 \leq 16 \cdot \lambda$.

We use Lemma \ref{theW} in the proof. Consider the first solution. By $W \leq 3\lambda$ and $p_1 \leq \lambda$, we have $$C_1 \leq \frac{W-p_1}6+p_1 =\frac{W}6+\frac 56 \cdot p_1 \leq \frac 43 \cdot \lambda \ .$$ We also have (by the same constraints and by $p_1 \geq p_2$) $$C_2 \leq \frac{W-p_1-p_2}3 + p_2 \leq \frac{W}3+ \frac{p_2}3 \leq \frac 43 \cdot \lambda \ .$$
Consider the second solution. We have $L_2=C_2$, and it is left to analyze $L_3$. We have (by using also the constraints $p_2\geq p_3 \geq p_4$) $$L_3 \leq \frac{W-P_4}3+p_3+p_4 \leq \frac{W}{3}+\frac 23 \cdot p_4\leq \frac 43 \cdot \lambda \ .$$

Finally, consider $6\cdot C_3+6 \cdot L_1$. The set of jobs for the two machines is
$$\{1,3,4,6,6,9,9,10,12,12,15,15,16,18,18,\ldots\} \ .$$ Taking this set such that every job occurs six times, and replacing jobs by jobs that are not smaller, we get
$$\{1,1,1,1,1,1,2,2,2,2,3,3,3,3,3,4,4,4,4,4,5,5,5,5,5,6,6,6,6,6,7,7,7,7,7,8,8,8,8,8,9,9,$$ $$9,9,9,10,10,10,10,10,11,11,11,11,11,12,12,12,12,12,13,13,13,13,13,14,$$ $$14,14,14,14,15,15,
15,15,15,16,16,16,16,16,17,17,17,17,17,18,18,18,18,18,\ldots\} \ .$$
For the jobs of indexes $6k+3,6k+3,6k+4,6k+6,6k+6$ (for $k\geq 1$, each index occurring six times such that for example $6k+3$ occurs $12$ times), it is replaced with $6k+1,6k+1,6k+1,6k+1,6k+1,
6k+2,6k+2,6k+2,6k+2,6k+2,
6k+3,6k+3,6k+3,6k+3,6k+3,
6k+4,6k+4,6k+4,6k+4,6k+4,
6k+5,6k+5,6k+5,6k+5,6k+5,
6k+6,6k+6,6k+6,6k+6,6k+6$.

For the jobs of indexes $1,3,4,6,6$ appearing six times, they were replaced with $1$, $1$, $1$, $1$, $1$, $1$, $2$, $2$, $2$, $2$, $3$, $3$, $3$, $3$, $3$, $4$, $4$, $4$, $4$, $4$, $5$, $5$, $5$, $5$, $5$, $6$, $6$, $6$, $6$, $6$.

We have $6\cdot C_3+6 \cdot L_1 \leq 5 \cdot W +  p_1 - p_2 \leq 16 \cdot \lambda$, by $p_1 \leq \lambda$, $p_2 \geq 0$, and $W \leq 3 \lambda$.
\end{proof}

\subsection{Four machines}

We define two solutions, and show that the algorithm that uses them has competitive ratio at most $1.375$.

The first solution is defined as follows. For $i=1,2$, machine $i$ has jobs $\{8k+i: k\geq 0\}$, that is, the sets are $\{1,9,17,25,\ldots\}$ for machine $1$ and $\{2,10,18,26,\ldots\}$ for machine $2$. Machine $3$ has the jobs $\{8k+3: k\geq 0\}\cup \{8k+6: k\geq 0\} \cup \{8k: k\geq 1\}=\{3,6,8,11,14,16,\ldots\}$. Machine $4$ has the jobs $\{8k+4: k\geq 0\}\cup \{8k+5: k\geq 0\} \cup \{8k+7: k\geq 0\}=\{4,5,7,12,13,15,\ldots\}$.

The second solution is defined as follows. Machine $i$ (for $1\leq i \leq 4$) has job $i$, and all jobs such that $j\geq 5$ and $(j+i) \bmod 4  = 1$. Thus, the set of machine $1$ is $\{1\}\cup\{4k: k\geq 2\}=\{1,8,12,16,20,\ldots\}$, the set of machine $2$ is $\{2\}\cup\{4k-1: k\geq 2\}=\{2,7,11,15,19,\ldots\}$, the set of machine $3$ is $\{3\}\cup\{4k-2: k\geq 2\}=\{3,6,10,14,18,\ldots\}$, and the set of machine $4$ is $\{4\}\cup\{4k-3: k\geq 2\}=\{4,5,9,13,17,\ldots\}$.

\begin{theorem}
The competitive ratio of this algorithm for $m=4$ is at most $1.375$.
\end{theorem}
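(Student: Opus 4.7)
The plan is to instantiate the pair-based methodology of Section~\ref{premi} with target value $11\lambda/8$. I will show that in each solution at least two out of four loads are bounded by $11\lambda/8$ unconditionally, which leaves four cross-pairs $(C_i,L_j)$ with $i\in\{3,4\}$ and $j\in\{1,2\}$ to be handled by weighted sums.

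For the first solution, applying Lemma~\ref{theW} with $\alpha=8$ to the (single-orbit) tails of $C_1$ and $C_2$ should yield $C_1\leq p_1+(W-p_1)/8=(W+7p_1)/8$ and the analogous expression for $C_2$; combined with $W\leq 4\lambda$ and $p_1\leq\lambda$, this gives $C_1,C_2\leq 11\lambda/8$. For the second solution, sortedness must be exploited more carefully. For $L_4=p_4+p_5+\sum_{k\geq 2}p_{4k+1}$ I plan to bound the tail by $(W-P_5)/4$ via Lemma~\ref{theW}, then use $P_5\geq 3p_4+p_5$ (since $p_1,p_2,p_3\geq p_4$) together with $p_5\leq\lambda/2$ to obtain $L_4\leq(W+3p_5)/4\leq 11\lambda/8$. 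For $L_3=p_3+p_6+\sum_{k\geq 2}p_{4k+2}$ the analogous tail bound is $(W-P_6)/4$; combined with $P_6\geq 3(p_3+p_6)$ (from $p_1,p_2\geq p_3$ and $p_4,p_5\geq p_6$) this gives $L_3\leq(W+p_3+p_6)/4$, and since the same inequality forces $p_3+p_6\leq W/3$, we get $L_3\leq W/3\leq 4\lambda/3<11\lambda/8$.

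For the remaining pairs, I aim to find, for each pair $(C_i,L_j)$, positive integer weights $(k_1,k_2)$ with $k_1+k_2=3$ and $k_1C_i+k_2L_j\leq W\leq 4\lambda\leq 3\cdot 11\lambda/8$, so that pigeonhole forces $\min(C_i,L_j)\leq 11\lambda/8$. I expect weights $(2,1)$ on $(C_i,L_j)$ to suffice for $(C_3,L_1),(C_3,L_2),(C_4,L_1)$, and weights $(1,2)$ for the asymmetric pair $(C_4,L_2)$. Each verification partitions the positive integers into blocks of eight consecutive positions and shows that the weighted per-block contribution is bounded by the sum of the $p$-values inside the block (a length-$8$ partial sum in the $(2,1)$ cases and a length-$7$ partial sum in the $(1,2)$ case), by substituting each occurrence of $p_{8k+a}$ by some $p_{8k+a'}$ with $a'\leq a$ using the sorted order; telescoping the block bounds then gives total bound at most $W$. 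For instance, in $(2C_3+L_1)$ the per-block weight $2p_{8k+3}+p_{8k+4}+2p_{8k+6}+3p_{8k+8}$ is bounded by $P_{8k+8}-P_{8k}$ using the substitutions $p_{8k+3}\leq p_{8k+1}$, $p_{8k+8}\leq p_{8k+2}$, $p_{8k+6}\leq p_{8k+5}$, $p_{8k+8}\leq p_{8k+7}$, the other four positions matching trivially.

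The main obstacle I anticipate is the pair $(C_4,L_2)$: the natural $(2,1)$ weighting is too weak because $L_2$ contains $p_2$, which can be close to $\lambda$, while $C_4$ starts at $p_4$ without a matching early term. Shifting to $(1,2)$ yields per-block weight $7$ with target $P_{8k+7}-P_{8k}$; the in-block inequality then relies on substitutions like $2p_{8k+7}\leq p_{8k+2}+p_{8k+6}$, while the boundary block $k=0$ requires $2p_2+p_4+p_5+3p_7\leq P_7$, which reduces after cancellation to $p_3+p_6\geq 2p_7$ and is valid since $p_3,p_6\geq p_7$. Once the four pair bounds are in place, the conclusion is automatic: in any realization where the first solution's makespan exceeds $11\lambda/8$, some $C_i$ with $i\in\{3,4\}$ exceeds $11\lambda/8$, and the two pair inequalities involving that $C_i$ force $L_1,L_2\leq 11\lambda/8$; combined with $L_3,L_4\leq 11\lambda/8$, the second solution is then good.
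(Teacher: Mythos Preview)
Your proposal is correct and follows essentially the same route as the paper: the same unconditional bounds on $C_1,C_2,L_3,L_4$, and the same four weighted cross-sums $2C_3+L_1$, $2C_4+L_1$, $2C_3+L_2$, $C_4+2L_2$, each shown to be at most $W\leq 4\lambda$. Two minor remarks: your stated inequality $P_5\geq 3p_4+p_5$ should read $P_5\geq 4p_4+p_5$ (your justification ``$p_1,p_2,p_3\geq p_4$'' actually gives this, and it is what you need for $L_4\leq (W+3p_5)/4$), and your $L_3$ bound via $p_3+p_6\leq W/3$ is in fact the sharper $4\lambda/3$ variant that the paper notes in its Discussion section.
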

\begin{proof}
We have that $C_i$ (for $i=1,2,3,4$) is the total sizes of jobs for the first solution, and $L_i$ is the total sizes of jobs for the second solution. We start with showing the following properties:
$C_1, C_2 \leq 1.375 \cdot \lambda$, $L_3, L_4 \leq 1.375 \cdot \lambda$.

We have: $C_1=\sum_{k=0}^{\infty} p_{8k+1}$, $C_2=\sum_{k=0}^{\infty} p_{8k+2}$, $L_3=p_3+\sum_{k=1}^{\infty} p_{4k+2}$, and $L_4=p_4+\sum_{k=1}^{\infty} p_{4k+1}$. We use Lemma \ref{theW}, $W \leq 4\cdot \lambda$, the sorted order of jobs, and $\lambda \geq p_1$.
Using the lemma with $\alpha=8$, $\beta=i$, and $\gamma=1$, we have for $i=1,2$ that $$C_i \leq p_i+\sum_{k=1}^{\infty} p_{8k+i} \leq p_i+\frac{W-P_i}8 \leq p_1+\frac{W-p_1}8 =\frac{W}8+\frac 78\cdot p_1 \leq \frac{\lambda}2+\frac 78 \cdot \lambda = 1.375 \cdot \lambda \ .$$

In order to analyze $L_3$ and $L_3$, we use the lemma with $\alpha=4$, $\beta=5-i$, and $\gamma=2$ for $i=3,4$, and find that $\sum_{k=2}^{\infty} p_{4k+5-i} \leq \frac{W-P_{9-i}}4$.

For $i=3$, we have $$L_3 \leq p_3+p_6+\frac{W-p_1-p_2-p_3-p_4-p_5-p_6}4 \ .$$ By $p_1 \geq p_2 \geq p_3$ and $p_4 \geq p_5 \geq p_6$, we have $L_3 \leq \frac{W+p_3+p_6}4$.
We will show that $p_3+p_6 \leq 1.5 \cdot \lambda$, and by $\lambda \geq \frac W4$, we find that $L_3 \leq 1.375 \cdot \lambda$. The required property holds since $p_3 \leq \lambda$ and $p_6 \leq \frac{\lambda}2$.

For $i=4$, we have $$L_4 \leq p_4+p_5+\frac{W-p_1-p_2-p_3-p_4-p_5}4 \ .$$ Since we have $\lambda \geq 2\cdot p_5$. We get $L_4 \leq \frac{W+3\cdot p_5}4 \leq 1.375 \cdot \lambda$.

Next, we consider the four sums $2\cdot C_3+L_1$, $2\cdot C_4+L_1$, $2 \cdot C_3+L_2$, and $C_4+2\cdot L_2$. We show that each one of them does not exceed $4\cdot \lambda$.
In the remaining cases, we use the sorted order of the jobs of indexes $8k+1,8k+2,\ldots,8k+8$.

Consider the first sum: $$2C_3+L_1=2\sum_{k=0}^{\infty} p_{8k+3}+2\sum_{k=0}^{\infty} p_{8k+6}+2\sum_{k=0}^{\infty} p_{8k+8}+\sum_{k=2}^{\infty} p_{4k}+p_1 \ ,$$

We have $p_1+2p_3+2p_6+3p_8 \leq P_8$. The remainder (after deducting the left hand side from the right hand side of the earlier expression) is $$2\sum_{k=1}^{\infty} p_{8k+3}+2\sum_{k=1}^{\infty} p_{8k+6}+2\sum_{k=1}^{\infty} p_{8k+8}+\sum_{k=3}^{\infty} p_{4k} \\ $$ $$= 2\sum_{k=1}^{\infty} p_{8k+3}+2\sum_{k=1}^{\infty} p_{8k+6}+2\sum_{k=1}^{\infty} p_{8k+8}+\sum_{k=1}^{\infty} p_{8k+8}+\sum_{k=1}^{\infty} p_{8k+4}
\leq W-P_8 \ .$$ Thus, $2C_3+L_1 \leq W \leq 4\lambda$.

\medskip
\medskip

Consider the second sum:
$$2C_4+L_1=2\sum_{k=0}^{\infty} p_{8k+4}+2\sum_{k=0}^{\infty} p_{8k+5}+2\sum_{k=0}^{\infty} p_{8k+7}+\sum_{k=2}^{\infty} p_{4k}+p_1 \ ,$$

 We have $p_1+2p_4+2p_5+2p_7 +p_8 \leq P_8$.  The remainder is $$2\sum_{k=1}^{\infty} p_{8k+4}+2\sum_{k=1}^{\infty} p_{8k+5}+2\sum_{k=1}^{\infty} p_{8k+7}+\sum_{k=3}^{\infty} p_{4k}\\$$ $$=  2\sum_{k=1}^{\infty} p_{8k+4}+2\sum_{k=1}^{\infty} p_{8k+5}+2\sum_{k=1}^{\infty} p_{8k+7}+\sum_{k=1}^{\infty} p_{8k+8}+\sum_{k=1}^{\infty} p_{8k+4} \leq W-P_8 \ . $$ Thus, $2C_4+L_1 \leq W \leq 4\lambda$.
\medskip
\medskip

Consider the third sum:
$$2C_3+L_2=2\sum_{k=0}^{\infty} p_{8k+3}+2\sum_{k=0}^{\infty} p_{8k+6}+2\sum_{k=0}^{\infty} p_{8k+8}+\sum_{k=2}^{\infty} p_{4k-1}+p_2 \ ,$$
We have $p_2+2p_3+2p_6+p_7 +2p_8 \leq P_8$.

The remainder is $$2\sum_{k=1}^{\infty} p_{8k+3}+2\sum_{k=1}^{\infty} p_{8k+6}+2\sum_{k=1}^{\infty} p_{8k+8}+\sum_{k=3}^{\infty} p_{4k-1} $$ $$= 2\sum_{k=1}^{\infty} p_{8k+3}+2\sum_{k=1}^{\infty} p_{8k+6}+2\sum_{k=1}^{\infty} p_{8k+8}+\sum_{k=1}^{\infty} p_{8k+3}+\sum_{k=1}^{\infty} p_{8k+7} \leq W-P_8 \ . $$ Thus, $2C_3+L_2 \leq W \leq 4\lambda$.
\medskip
\medskip

Consider the fourth sum (where the calculation is different since the structure of the sum is slightly different):
$$C_4+2\cdot L_2=\sum_{k=0}^{\infty} p_{8k+4}+\sum_{k=0}^{\infty} p_{8k+5}+\sum_{k=0}^{\infty} p_{8k+7}+2\sum_{k=2}^{\infty} p_{4k-1}+2\cdot p_2 \ . $$ We have $2p_2+p_4+p_5+3p_7 \leq P_7$.

The remainder is $$\sum_{k=1}^{\infty} p_{8k+4}+\sum_{k=1}^{\infty} p_{8k+5}+\sum_{k=1}^{\infty} p_{8k+7}+2\sum_{k=3}^{\infty} p_{4k-1} $$ $$= \sum_{k=1}^{\infty} p_{8k+4}+\sum_{k=1}^{\infty} p_{8k+5}+\sum_{k=1}^{\infty} p_{8k+7}+2\sum_{k=1}^{\infty} p_{8k+3}+2\sum_{k=1}^{\infty} p_{8k+7} \leq W-P_7 \ . $$
In this case, we use the set of indexes $\{8,9,10,11,12,13,14,15,\ldots\}$.
Thus, $2C_3+L_2 \leq W \leq 4\lambda$.

Recall that we have $C_1,C_2, L_3, L_4 \leq 1.375 \cdot \lambda$. Therefore, the competitive ratio holds.
\end{proof}
\subsection{Five machines}

We define the algorithm for $m=5$ as follows.
We define two solutions, and show that the algorithm that uses them has competitive ratio at most $\frac 43$.

The first solution is defined as follows. Machine $1$ has the following jobs: $\{10k+1: k\geq 0\}$, that is, the set is $\{1,11,21,\ldots\}$. Machine $2$ has the following jobs: $\{10k+2: k\geq 0\}$, that is, the set is $\{2,12,22,\ldots\}$.
Machine $3$ has the following jobs: $\{5k+3: k\geq 0\}$, that is, the set is $\{3,8,13,18,\ldots\}$.
Machine $4$ has the following jobs: $\{10k+4: k\geq 0\}\cup\{10k+7: k\geq 0\}\cup\{10k: k\geq 1\}$, that is, the set is $\{4,7,10,14,17,20,24\ldots\}$.
Machine $5$ has the following jobs: $\{10k+5: k\geq 0\}\cup\{10k+6: k\geq 0\}\cup\{10k+9: k\geq 0\}$, that is, the set is $\{5,6,9,15,16,19,\ldots\}$.

The second solution is defined as follows. Machine $i$ (for $1\leq i \leq 5$) has job $i$, and all jobs such that $j\geq 4$ and $(j+i) \bmod 5  = 1$. Thus, the set of machine $1$ is $\{1\}\cup\{5k: k\geq 2\}=\{1,10,15,\ldots\}$, the set of machine $2$ is $\{2\}\cup\{5k-1: k\geq 2\}=\{2,9,14,19,\ldots\}$, the set of machine $3$ is $\{5k-2: k\geq 2\}=\{3,8,13,18,\ldots\}$,
the set of machine $4$ is $\{4\}\cup\{5k-3: k\geq 2\}=\{4,7,12,17,18,\ldots\}$, and the set of machine $5$ is $\{5\}\cup\{5k-4: k\geq 2\}=\{5,6,11,16,21,\ldots\}$.

\begin{theorem}
The competitive ratio of this algorithm for $m=5$ is at most $1.4$.
\end{theorem}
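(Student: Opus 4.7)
My plan is to mimic the four-step framework from the $m = 4$ proof. Throughout, let $\lambda$ denote the offline optimum makespan, so $W = \sum_j p_j \leq 5\lambda$, $p_j \leq \lambda$ for every $j \geq 1$, and the pigeonhole bound $\lambda \geq p_5 + p_6$ holds (hence $p_6 \leq \lambda/2$ and $p_7 \leq \lambda/2$, since $p_5, p_6 \geq p_7$).

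First I would establish single-machine bounds. For Solution~1, $C_1, C_2, C_3 \leq \tfrac{7\lambda}{5}$ follow from Lemma~\ref{theW} applied to each machine's defining arithmetic progression (with $\alpha = 10$ for $C_1, C_2$ and $\alpha = 5$ for $C_3$), combined with $P_j \geq j p_j$ and $W \leq 5\lambda$; e.g.\ $C_1 \leq p_1 + (W - p_1)/10 = \tfrac{9 p_1}{10} + \tfrac{W}{10} \leq \tfrac{7\lambda}{5}$. For Solution~2, $L_3 = C_3$ is automatically bounded. For $L_4 = p_4 + p_7 + \sum_{k \geq 2} p_{5k+2}$, Lemma~\ref{theW} with $\alpha = 5, \beta = 2, \gamma = 2$ gives $\sum_{k \geq 2} p_{5k+2} \leq (W - P_7)/5$, and the tightened sorted-order estimate $P_7 \geq 4 p_4 + 3 p_7$ (using $p_1, p_2, p_3 \geq p_4$ and $p_5, p_6 \geq p_7$) yields $L_4 \leq \tfrac{p_4}{5} + \tfrac{2 p_7}{5} + \tfrac{W}{5} \leq \tfrac{7\lambda}{5}$; analogously $L_5 \leq \tfrac{7\lambda}{5}$ via $P_6 \geq 5 p_5 + p_6$ together with $p_5 + p_6 \leq \lambda$ and $p_6 \leq \lambda/2$.

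This leaves $U_1 = \{4, 5\}$ and $U_2 = \{1, 2\}$, so for each of the four pairs $(i, j) \in U_1 \times U_2$ I would produce positive integers $k_1, k_2$ with $k_1 C_i + k_2 L_j \leq (k_1 + k_2) \tfrac{7\lambda}{5}$. Probing the realization ``$p_1 = \lambda$ with the remaining $4\lambda$ of work spread over many arbitrarily small jobs'' yields $C_i \to \tfrac{6\lambda}{5}$ for $i = 4, 5$ and $L_1 \to \tfrac{9\lambda}{5}$, whence $2 C_i + L_1 \to \tfrac{21\lambda}{5}$; this pins the coefficient choice to $(k_1, k_2) = (2, 1)$ with target $3 \cdot \tfrac{7\lambda}{5} = \tfrac{21\lambda}{5}$ for every pair. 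I would then write $2 C_i + L_j$ as a head (block $[1, 10]$) plus a tail (blocks $[10k+1, 10k+10]$, $k \geq 1$); each per-block tail contribution is replaced by the sum of the eight largest indices of the block via sorted-order substitutions and summed using Lemma~\ref{theW} on each of the four arithmetic progressions appearing in $2 C_i + L_j$; the head is bounded by $P_8$ via a direct source-to-demand assignment; the combined estimate is then tightened using $p_6, p_7 \leq \lambda/2$, $p_5 + p_6 \leq \lambda$, and sorted-order estimates to land below $\tfrac{21\lambda}{5}$. The $C_4 \leftrightarrow C_5$ and $L_1 \leftrightarrow L_2$ symmetries reduce the four cases to essentially two structurally distinct analyses, the tight ones being $(4, 1)$ and $(5, 1)$.

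Finally, the pigeonhole step is identical to the $m = 4$ case: if in some realization both $C_{i'} > \tfrac{7\lambda}{5}$ for some $i' \in U_1$ and $L_{j'} > \tfrac{7\lambda}{5}$ for some $j' \in U_2$, then $2 C_{i'} + L_{j'} > \tfrac{21\lambda}{5}$ contradicts the pair bound, so the minimum of the two solutions' completion times is at most $\tfrac{7\lambda}{5} = 1.4\lambda$ in every realization. The main obstacle is Step~3: the routine Lemma-based bound on $2 C_i + L_j$ overestimates by a constant multiple of $\lambda$ and must be tightened via the problem-specific constraints $p_5 + p_6 \leq \lambda$ and $p_7 \leq \lambda/2$; the asymptotic tightness of the realization above shows the choice $(2, 1)$ is essentially forced and the bound $\tfrac{21\lambda}{5}$ is best possible.
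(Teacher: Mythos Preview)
Your overall framework (single-machine bounds for $C_1,C_2,C_3$ and $L_3,L_4,L_5$, then pair bounds for $\{C_4,C_5\}\times\{L_1,L_2\}$, then pigeonhole) matches the paper exactly, and your single-machine arguments are correct. The gap is in the pair-bound step.

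Your head/tail scheme yields $2C_4+L_1 \le P_8 + \sum_{k\ge 1}\sum_{j=3}^{10} p_{10k+j}$, and you propose to tighten this with $p_5+p_6\le\lambda$ and $p_6,p_7\le\lambda/2$. Take the realization $p_1=\cdots=p_5=\lambda$, $p_j=0$ for $j\ge 6$: then the right-hand side equals $P_8=5\lambda$, while the target is $4.2\lambda$; and every constraint you list ($p_5+p_6=\lambda$, $p_6=p_7=0\le\lambda/2$) is already tight or vacuous, so there is nothing left to tighten. The problem is that the bound ``head $\le P_8$'' discards too much: it trades $2p_4$ for $p_2+p_3$ and $2p_7$ for $p_5+p_6$, giving away up to $2\lambda$ of slack that is never recovered. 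A purely per-block substitution cannot fix this, because the tight realization $p_1=\lambda$ with $4\lambda$ of dust already forces $2C_4+L_1\to 4.2\lambda$, so there is no room for any loss in either head or tail.

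What the paper does instead is avoid the head/tail split altogether and use a \emph{cross-block} substitution on the $L_1$ term: it writes $5\sum_{k\ge 1} p_{10k+5}+5p_1$ and replaces four copies via $\sum_{k\ge 1} p_{10k+5}+p_1\le\sum_{k\ge 0} p_{10k+1}$ (shifting residue~$5$ into the residue~$1$ slot that $C_4$ leaves empty) and one copy via $\sum_{k\ge 1} p_{10k+5}\le\sum_{k\ge 0} p_{10k+10}$; this yields $10C_4+5L_1\le 4W+p_1\le 21\lambda$, i.e.\ exactly $2C_4+L_1\le 4.2\lambda$. For the $L_2$ pairs the paper also abandons your uniform $(2,1)$ choice and uses $(1,1)$, obtaining $C_4+L_2\le W/2$ and $C_5+L_2\le (W+p_9)/2$ directly; the symmetry $L_1\leftrightarrow L_2$ you invoke does not hold, since $L_1$ contains $p_1$ but $L_2$ contains $p_2$, which is why the two cases need different weights.
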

\begin{proof}
We use the notation $C_i$ and  $L_i$ for $1 \leq i \leq 5$ again. In particular, we have $C_3=L_3$.
We will show the following inequalities: $C_1, C_2, C_3 \leq 1.4 \cdot \lambda$, $L_3, L_4, L_5 \leq 1.4 \cdot \lambda$. For the pairs of machines $i_1$ and $i_2$ where $i_1 \in \{4,5\}$ and $i_2 \in \{1,2\}$, we will consider the pairs $C_{i_1}$ and $L_{i_2}$, and by the pigeonhole principle we will have at least one of $\max\{L_1,L_2,L_3,L_4,L_5\} \leq 1.4 \cdot \lambda$ and
$\max\{C_1,C_2,C_3,C_4,C_5\} \leq  1.4 \cdot \lambda$. We use the same properties as in the earlier proofs.

Consider the first solution. We have $$C_1 \leq \frac{W-p_1}{10}+p_1 =\frac{W}{10}+0.9 \cdot p_1 \leq 1.4 \cdot \lambda \ .$$ We also have $C_2 \leq C_1$. Additionally, $$C_3 \leq \frac{W-P_3}{5}+p_3 \leq \frac{W}{5}+0.4 \cdot p_3 \leq 1.4 \cdot \lambda \ .$$

Consider the second solution. By $L_3=C_3$, it is left to analyze $L_4$ and $L_5$.

We have $$L_4 \leq \frac{W-P_7}5+p_4+p_7 \leq \frac{W}{5}+\frac{1}5 \cdot p_4+\frac 25 \cdot p_7\leq 1.4 \cdot \lambda \ ,$$ and $$L_5 \leq \frac{W-P_6}5+p_5+p_6 \leq \frac{W}{5}+\frac 45 \cdot p_7\leq 1.4 \cdot \lambda \ .$$

We will consider the following four sums $2C_4+2L_2$,  $2C_5+2L_2$, $10C_4+5L_1$, $10C_5+5L_1$. By showing $2C_i+2L_2 \leq 5.5 \cdot \lambda$ and $10C_i+5L_1 \leq 21 \cdot \lambda$,
we will find that if $\max\{L_1,L_2\} \geq 1.4 \cdot \lambda$, then  $\max\{C_1,C_2\} \leq 1.4 \cdot \lambda$ holds.

\medskip

In the first two proofs we will use $p_{10k+2} \geq p_{10k+4}$ for any $k\geq 0$.

We have $$2C_4+2L_2=2\sum_{k=0}^{\infty}p_{10k+4}+2\sum_{k=0}^{\infty}p_{10k+7}+2\sum_{k=0}^{\infty}p_{10k+10}+ 2\sum_{k=2}^{\infty}p_{5k-1} +2\cdot p_2 $$ $$=2\sum_{k=0}^{\infty}p_{10k+4}+2\sum_{k=0}^{\infty}p_{10k+7}+2\sum_{k=0}^{\infty}p_{10k+10} +2\sum_{k=0}^{\infty}p_{10k+9}+2\sum_{k=1}^{\infty}p_{10k+4}+2\cdot p_2  $$
$$ \leq 2\sum_{k=0}^{\infty}p_{10k+4}+2\sum_{k=0}^{\infty}p_{10k+7}+2\sum_{k=0}^{\infty}p_{10k+10} +2\sum_{k=0}^{\infty}p_{10k+9}+2\sum_{k=0}^{\infty}p_{10k+2}  \leq W\ . $$

Thus, $2C_4+2L_2 \leq 5\cdot \lambda$.

\medskip
\medskip

We have
$$2C_5+2L_2=2\sum_{k=0}^{\infty}p_{10k+5}+2\sum_{k=0}^{\infty}p_{10k+6}+2\sum_{k=0}^{\infty}p_{10k+9}+ 2\sum_{k=2}^{\infty}p_{5k-1} +2\cdot p_2 $$ $$ \leq 2\sum_{k=0}^{\infty}p_{10k+5}+2\sum_{k=0}^{\infty}p_{10k+6}+2\sum_{k=0}^{\infty}p_{10k+9} +2\sum_{k=0}^{\infty}p_{10k+9}+2\sum_{k=0}^{\infty}p_{10k+2}\ . $$

We use $$\sum_{k=0}^{\infty}p_{10k+9} = \sum_{k=0}^{\infty}p_{10k+19} +p_9 \leq \sum_{k=0}^{\infty}p_{10k+10} +p_9 \ , $$ (where only one of the four occurrences of the sum $\sum_{k=0}^{\infty}p_{10k+9}$ is replaced),
and get, $2C_5+2L_2 \leq W + p_9 \leq 5.5\cdot \lambda$.

\medskip
\medskip
We have
$$10C_4+5L_1=10\sum_{k=0}^{\infty}p_{10k+4}+10\sum_{k=0}^{\infty}p_{10k+7}+10\sum_{k=0}^{\infty}p_{10k+10}+ 5\sum_{k=2}^{\infty}p_{5k} +5\cdot p_1 $$ $$=10\sum_{k=0}^{\infty}p_{10k+4}+10\sum_{k=0}^{\infty}p_{10k+7}+10\sum_{k=0}^{\infty}p_{10k+10} +5\sum_{k=0}^{\infty}p_{10k+10}+5\sum_{k=1}^{\infty}p_{10k+5}+5\cdot p_1 \ . $$

We use  $$\sum_{k=1}^{\infty}p_{10k+5} +p_1 \leq \sum_{k=0}^{\infty}p_{10k+1}  \ , $$ and replace four of the five occurrences, and $$\sum_{k=1}^{\infty}p_{10k+5}=\sum_{k=0}^{\infty}p_{10k+15} \leq \sum_{k=0}^{\infty}p_{10k+10} \ , $$

The last expression is bounded from above by
$$10\sum_{k=0}^{\infty}p_{10k+4}+10\sum_{k=0}^{\infty}p_{10k+7}+15\sum_{k=0}^{\infty}p_{10k+10} +4\sum_{k=0}^{\infty}p_{10k+1}+\sum_{k=0}^{\infty}p_{10k+10}+ p_1 \ , $$
and we get, $10C_4+5L_1 \leq 4W + p_1 \leq 21\cdot \lambda$.

\medskip
\medskip

We have
$$10C_5+5L_1=10\sum_{k=0}^{\infty}p_{10k+5}+10\sum_{k=0}^{\infty}p_{10k+6}+10\sum_{k=0}^{\infty}p_{10k+9}+ 5\sum_{k=2}^{\infty}p_{5k} +5\cdot p_1 $$

An argument similar to the previous case shows that the last expression is bounded from above by
$$10\sum_{k=0}^{\infty}p_{10k+5}+10\sum_{k=0}^{\infty}p_{10k+6}+10\sum_{k=0}^{\infty}p_{10k+9} +5\sum_{k=0}^{\infty}p_{10k+10}+4\sum_{k=0}^{\infty}p_{10k+1}+\sum_{k=0}^{\infty}p_{10k+10}+ p_1 \ , $$
and we get, $10C_5+5L_1 \leq 4W + p_1 \leq 21\cdot \lambda$.
\end{proof}

\subsection{Discussion}
As we saw, the cases $m=2,3,4,5$ exhibit the advantage of using two-solution algorithms compared to one solution algorithms.
In this section we briefly discuss the limitation of our approach.
First, we note that the results for $m=4,5$ are not tight. We can see that in some cases the analysis can be improved.
In the proof for the case $m=4$ and the analysis for $L_3$, we can show that $p_3+p_6 \leq \frac 43 \cdot \lambda$, and thus $L_3 \leq \frac 43 \cdot \lambda$. Consider an optimal offline solution, and the assignment of (only) the first six jobs in it. If at least one of the first three jobs $1\leq i_1 \leq 3$ is assigned with another job out of the six $1 \leq i_2 \leq 6$ (where $i_1 \neq i_2$), we have $p_3+p_6 \leq p_{i_1}+p_{i_2} \leq \lambda \leq \frac 43 \cdot \lambda$. Otherwise, the three jobs $4,5,6$ are assigned together and $3\cdot p_6 \leq p_4+p_5+p_6 \leq \lambda$, and $p_3 \leq p_1 \leq \lambda$, so $p_3+p_6 \leq \lambda+ \frac{\lambda}3=\frac 43 \cdot \lambda$. A similar argument can be used in some other cases, for example, in the analysis of $L_4$ and $m=5$.

However, our upper bounds on the competitive for those algorithms (for $m=4,5$) cannot be improved. This can be seen by the following realization. There are $m+1$ jobs of size $\frac 12$, and the remaining jobs are of a very small fixed size $\eps>0$, and total size $\frac{m-1}2$. An optimal offline solution has completion time $1$, but solutions where the jobs of indexes $m$ and $m+1$ are assigned to machine $m$, and this machine receives approximately a fraction of $\frac 1m$ or a larger fraction of the remaining jobs, will have completion times of at least (approximately) $1.5-\frac{1}{2m}$. Since both solutions have this structure, this shows a lower bound on the makespan for $m=4,5$, and a lower bound of the competitive ratios which shows that the analysis is tight.

Since our four algorithms are constructed by similar methods which can be easily extended for any number of machines $m\geq 6$, one could still hope that our approach would give a two-solution algorithm of competitive ratio at most $1.5$ for any $m\geq 2$. However, this in not the case. As an example, consider the case $m=20$. Consider an input realization with one job of size $1$, thirty jobs of size $\frac{7}{30}$ each, and jobs of size $\eps$ of total size $12$. An optimal offline solution has one machine with a job of size $1$, ten machines with three jobs of size $\frac{7}{30}$, and a total size of $0.3$ with jobs of size $\eps$, and nine machines with jobs of size $\eps$ of total size $1$ per machine. A solution of the type of the first solutions (for $m=2,4$) would assign three jobs of size $\frac{7}{30}$ to the $m$th machine, and approximately $\frac{3}{40}$ of the total size of jobs of size $\eps$, for a load of $1.6$. A solution of the type of the second solutions would assign the jobs of size $1$ to the first machine, and approximately $\frac{1}{20}$ of the total size of jobs of size $\eps$, for a load of $1.6$. Generalizaing this for different values of $m$, one can get that the algorithm will not have competitive ratio below $\frac 53 - \frac{4}{3m}$ for even values of $m$, and below $\frac 53 - \frac{16}{9m-3}$ for odd values of $m$.
Thus, at present time it is unknown whether a two-solution algorithm can have an improved competitive ratio compared to one-solution algorithms for large values of $m$.

\section{Lower bounds for two solutions}
In this section, we prove lower bounds on the competitive ratio for two-solution ordinal algorithms. We show a lower bound of $1.25$ on the competitive ratio for $m=2$, which is tight, since we saw an algorithm matching this bound. For $m\geq 3$, we show a lower bound of $\frac 43$ on the competitive ratio, using two different constructions. For $m=3$, this matches the bound of the algorithm that we saw, and for $m=4,5$, the bound is close to the upper bounds. We prove the lower bound for $m=2$ first, then the one for $m\geq 4$, and finally the one for $m=3$, which requires a larger number of input realizations.

\begin{proposition}
Every two-solution ordinal algorithm for two machines has a competitive ratio of at least $1.25$.
\end{proposition}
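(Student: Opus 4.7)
The plan is to show that for any two solutions $\pi_1, \pi_2$ maintained by a two-solution algorithm on $m=2$ machines, there exists an input realization on which both solutions have makespan at least $\frac{5}{4}\lambda$, where $\lambda$ is the optimal offline makespan. The argument is an adversary-style case analysis on the structure of the two solutions for the first few jobs.

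First I would extract initial structural constraints from two small inputs. On $(1,1,1,1,0,\ldots)$ with $\lambda=2$, a ratio of at most $\frac{5}{4}$ forces load at most $\frac{5}{2}$, so at least one solution must partition the first four jobs 2-2. On $(3,1,1,1,0,\ldots)$ with $\lambda=3$, a ratio of at most $\frac{5}{4}$ forces load at most $\frac{15}{4}$, so at least one solution must use the split $\{1\}$, $\{2,3,4\}$. These two requirements are mutually exclusive for a single solution, so WLOG $\pi_1$ has a 2-2 split $\{1,X\}, \{2,3,4\}\setminus\{X\}$ for some $X\in\{2,3,4\}$, and $\pi_2$ has the split $\{1\}, \{2,3,4\}$.

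Next I would use the input $(2,1,1,1,1,0,\ldots)$ with $\lambda=3$ to constrain the placement of job $5$. A direct check shows $\pi_1$ achieves load at most $3$ only if job $5$ is on the machine opposite to job $1$, and $\pi_2$ achieves load at most $3$ only if job $5$ is on the same machine as job $1$. Thus, at least one of these two placements must hold, yielding three surviving cases for the joint placement of job $5$. In each case, I would then use $(1^6,0,\ldots)$ with $\lambda=3$ (which demands a $3$-$3$ split in at least one solution) to further pin down the placement of job $6$; for instance, when $\pi_2 = \{1\},\{2,3,4,5\}$, this input forces $\pi_1$ to place job $6$ together with job $1$.

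The final step is to exhibit, in each remaining case, a defeating input, typically of the form $(a,1,1,1,1,1,0,\ldots)$ for some $a\in\{2,3,4,5\}$ or $(a,b,1,1,1,1,0,\ldots)$ for small $a,b$, on which both solutions have load at least $\frac{5}{4}\lambda$. The main obstacle is this step: the case analysis branches further depending on the algorithm's choices (the value of $X$ in $\pi_1$ and the placement of job $6$ in $\pi_2$), and each sub-case may require a different defeating input. The crux is identifying an input where the machine in $\pi_1$ containing job $1$ together with its forced companions ($X$, $6$, and possibly later jobs assigned to that side), and the machine in $\pi_2$ opposite to job $1$ (holding $2,3,4,5$ and possibly later jobs), simultaneously exceed the bound $\frac{5}{4}\lambda$; the size $a$ and the multiplicities of small jobs must be calibrated to the case so that both heavy sides cross the threshold at the same realization.
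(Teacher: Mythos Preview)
Your outline is sound in spirit but is not a proof: you explicitly stop before the decisive step and call the final case analysis an ``obstacle'' without resolving it. The paper's argument is short and complete, and the difference comes down to one choice of opening input. Instead of your pair $(1,1,1,1)$ and $(3,1,1,1)$, the paper starts with $\langle 4,1,1,1,1\rangle$ (optimal cost $4$); any split that puts job $1$ with another job gives load at least $5$, so one solution is forced to be exactly $\{1\},\{2,3,4,5\}$ on all five jobs. This single input accomplishes what your two four-job inputs achieve only partially, and it eliminates your free parameter $X$ and any need to look at job~$6$.

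With one solution pinned to $\{1\},\{2,3,4,5\}$, three further realizations $\langle 3,2,2,1,0\rangle$, $\langle 2,2,2,1,1\rangle$, $\langle 2,1,1,1,1\rangle$ each defeat it, so the \emph{second} solution must be optimal on all three. The first of these forces that solution's first four jobs to be $\{1,4\},\{2,3\}$; then $\langle 2,2,2,1,1\rangle$ kills the placement $\{2,3,5\}$ and $\langle 2,1,1,1,1\rangle$ kills the placement $\{1,4,5\}$. No branching survives.

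Your route can in fact be completed, and without going to job $6$: in your two cases where job $5$ sits opposite job $1$ in $\pi_1$, the input $\langle 3,2,2,1,0\rangle$ defeats $\pi_2$ (whether $\pi_2=\{1,5\},\{2,3,4\}$ or $\{1\},\{2,3,4,5\}$) and forces $X=4$ in $\pi_1$, after which $\langle 2,2,2,1,1\rangle$ defeats both solutions; in the remaining case (job $5$ with job $1$ in both $\pi_1$ and $\pi_2$), the input $\langle 4,1,1,1,1\rangle$ already defeats both. So the detour through $(1^6)$ and the unspecified ``calibrated'' inputs is unnecessary; but as written, your proposal leaves exactly the step that carries the contradiction undone.
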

\begin{proof}
Assume that there is an algorithm with two solutions and competitive ratio $r<1.25$. Consider the job sequence $<4,1,1,1,1>$. Since the optimal cost is $4$, the solution for which the cost is at most $r\cdot 4 <5$ must have the job subsets $\{1\}$ and $\{2,3,4,5\}$.

Consider the following job sequences. For each one of them we first show that the solution that we already considered is not the one with cost at most $r$ times the optimal cost. We will identify the subsets of the other solution for the first five jobs, and reach a contradiction.

\begin{enumerate}
\item For the job sequence $<3,2,2,1,0>$, the cost of an optimal offline solution is $4$, and the completion time of the already identified solution is $5$.
\item For the job sequence $<2,2,2,1,1>$, the cost of an optimal offline solution is $4$, and the completion time of the already identified solution is $6$.
\item For the job sequence $<2,1,1,1,1>$, the cost of an optimal offline solution is $3$, and the completion time of the already identified solution is $4$.
\end{enumerate}

The second solution has to obtain optimal solutions for all three job sequences, in order to avoid costs that are larger by $1$, since $(r-1)$ times the cost of an optimal solution is smaller than $1$. Given the first job sequence, the two subsets for the first four jobs are $\{1,4\}$ and $\{2,3\}$.
There are two possible structures for this solution, in terms of the subsets with five jobs.

For the solution with subsets $\{1,4\}$ and $\{2,3,5\}$, the total size for the second subset for the second job sequence is $5$, which gives a competitive ratio above $r$.
For the solution with subsets $\{1,4,5\}$ and $\{2,3\}$, the total size for the first subset for the third job sequence is $4$, which  gives a competitive ratio above $r$.
\end{proof}

\begin{proposition}
Every two-solution ordinal algorithm for $m\geq 4$ has a competitive ratio of at least $\frac 43$.
\end{proposition}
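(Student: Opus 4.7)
I will argue by contradiction: suppose some two-solution algorithm $\A$ attains competitive ratio $r<\frac 43$ on $m\geq 4$ machines. The goal is to exhibit a small family of input realizations for which no pair of partitions of $\{1,2,\ldots\}$ into $m$ classes can keep the minimum of the two solutions' makespans strictly below $\frac 43\lambda$ on all of them. I will keep job sizes in a tiny set such as $\{0,1,2,3\}$ so that $\lambda$ is a small integer and the $\frac43$ budget leaves essentially no slack.

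The first step is to pick a few tight realizations that each extract a structural constraint on whichever of the two solutions is ``good'' (has makespan $<\frac 43\lambda$) on that realization. For instance, $\langle 1,1,1,1,0,\ldots\rangle$ has $\lambda=1$ for any $m\geq 4$, and $r<\frac 43$ forces indices $1,2,3,4$ to occupy four distinct machines in \emph{both} solutions. Realizations of the form $\langle s,1,1,\ldots,1\rangle$ with $s\in\{2,3\}$ and the number of trailing ones chosen so that $\lambda=s$ force that in the good solution the index of the size-$s$ job is the unique representative of its machine on a substantial prefix. The all-ones realization with $m+1$ ones forces, in the good solution, at most two of the first $m+1$ indices per machine, producing an almost-balanced pairing on the prefix.

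The second step is a pigeonhole over which solution is good for each realization. If I choose three or more realizations whose good-solution structural requirements are pairwise incompatible, then one of the two solutions must satisfy two incompatible conditions simultaneously, which is impossible; this yields the contradiction. A natural incompatibility arises between a realization that forces index $1$ to share a machine with a specific later index (because a dense all-ones sequence demands tight pairing on a long prefix) and one that forces index $1$ to be isolated among a long prefix (because a single large initial job has no room for any companion). The main obstacle, and where I would concentrate the effort, is engineering realizations whose good-solution demands are \emph{genuinely} pairwise incompatible: since each demand constrains only a bounded prefix, many pairs of demands are trivially co-satisfiable by rearranging later indices, so the realizations must be calibrated so that the prefixes overlap in a single critical index (typically index $1$, or the pair $\{1,2\}$). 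I expect the case $m=4$ to be tightest, requiring the most careful choice of sizes and counts; for $m\geq 5$ the same family of realizations should extend by padding with additional unit jobs, with $\lambda$ unchanged and the constraints preserved verbatim.
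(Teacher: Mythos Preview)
Your plan is headed in the same direction as the paper's proof: pick a handful of realizations with sizes in $\{1,2,3\}$, extract a structural constraint on whichever solution is ``good'' for each, and show that two solutions cannot cover them all. But what you have is an outline, not a proof --- you explicitly flag the engineering of the realizations as ``the main obstacle,'' and the concrete examples you list do not close the argument.

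The specific gap is this. Your two incompatible demands are (a) a perfect pairing on the first $2m$ indices (from the all-ones input with $\lambda=2$) and (b) index $1$ isolated on a long prefix (from $\langle 3,1,\ldots,1\rangle$ with $\lambda=3$). With only these, one solution can satisfy (a) and the other (b), and you still need a third realization on which \emph{both} fail. Isolating only index $1$ is not enough to force such a failure: for instance, on $\langle 3,2,\ldots,2,1,\ldots,1\rangle$ with $m-2$ twos, $m+1$ ones, and $\lambda=3$, a solution that isolates job $1$ can still place the remaining $2m-1$ jobs (total size $3m-3$) on $m-1$ machines with every load exactly $3$. The paper's key extra idea is to first force \emph{both} indices $1$ and $2$ to be isolated in the non-pairing solution, via the realization $\langle 3,3,2,\ldots,2,1,\ldots,1\rangle$ (two threes, $m-4$ twos, $m+2$ ones, $\lambda=3$), on which the pairing solution already fails. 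Then the third realization $\langle 3,2,\ldots,2,1,\ldots,1\rangle$ defeats both: the pairing solution pairs job $1$ for load $\geq 4$, while the isolate-$1$-and-$2$ solution leaves integer sizes summing to $3m-5>3(m-2)$ for only $m-2$ machines, forcing a load of at least $4$. Your ``pairwise incompatible triple'' framing is also stronger than needed; the paper's argument is sequential (pin down one solution, then the other, then exhibit a common failure), which is easier to realize.
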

\begin{proof}
We use three input realizations as follows. All of them have $2m$ jobs of positive sizes.
\begin{enumerate}
\item An input with $2m$ jobs of positive (equal) sizes: $<1,1,\ldots,1>$.
\item An input with two jobs of size $3$ each, $m-4$ jobs of size $2$ each, and $m+2$ jobs of size $1$ each: $<3,3,2,\ldots,2,1\ldots,1>$. If $m=4$, there are no jobs of size $2$, but the construction is unchanged.
\item An input with one job of size $3$, $m-2$ jobs of size $2$, and $m+1$ of size $1$ each, that is, $<3,2,\ldots,2,1,\ldots,1>$.
\end{enumerate}
For the first input, an optimal offline solution has completion time $2$, by assigning two jobs to each machine. An optimal offline solution for the second input assigns the first job to the first machine, the second job to the second machine, there are $m-4$ machines with one job of size $2$ and a job of size $1$, and the two remaining machines have three jobs of size $1$ each.
An optimal offline solution for the third input assigns the first job to the first machine, there are $m-2$ machines with one job of size $2$ and a job of size $1$, and remaining machine has three jobs of size $1$ each.
The completion times for the last two solutions are equal to $3$.

Since all inputs have equal numbers of jobs with positive sizes, we consider the solutions in terms of the sets of $2m$ jobs.

In order to avoid a competitive ratio of $1.5$ for the first input, there has to be a solution with $m$ pairs of jobs, which are a partition of the $2m$ jobs.
This solution has a completion time of $4$ for the second input, since each one of the first two jobs is combined with one of the jobs of sizes at least $1$. This solution has a completion time of at least $4$ for the third input as well, since the first job is combined with another job. Thus, to avoid a competitive ratio of $\frac 43$, the second solution has each one of jobs $1$ and $2$ assigned to its own set, which holds due to the second input. This results in a schedule for the third input where jobs of integer sizes and total size $3m-5$ are assigned to $m-2$ machines, so there is at least one machine with completion time at least $4$, for a competitive ratio of at least $\frac 43$.
\end{proof}

\begin{proposition}
Every two-solution ordinal algorithm for three machines has a competitive ratio of at least $\frac 43$.
\end{proposition}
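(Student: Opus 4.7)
The plan is to argue by contradiction: assume the two-solution algorithm achieves competitive ratio $r<\frac 43$ for $m=3$, and derive a contradiction via a carefully chosen sequence of realizations with small integer optimal completion times $\lambda$, so that $r\cdot\lambda$ lies strictly between consecutive integer loads.

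First I would apply $<1,1,1,1,1,1>$, where $\lambda=2$; any machine carrying three or more of the first six jobs has load at least $3>r\cdot\lambda$, so at least one of the two solutions, call it $S_A$, must partition $\{1,\ldots,6\}$ into three pairs (one per machine). Let $S_B$ denote the other solution. Next, the realization $<3,3>$ ($\lambda=3$) forces the winning solution on this input to place jobs $1$ and $2$ on different machines, and $<3,2,2>$ ($\lambda=3$) forces the winning solution to place jobs $1,2,3$ on pairwise different machines, since otherwise some pair of these three jobs co-located gives load at least $4>r\cdot\lambda$.

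The realization $<3,1,1,1,1,1,1>$ ($\lambda=3$) is decisive: whatever job is paired with job $1$ in $S_A$, that pair has load $3+1=4>r\cdot\lambda$, so $S_A$ fails. Hence $S_B$ must achieve cost strictly below $4$ on this input. Since the total size is $9$ on $3$ machines and each machine load must be at most $3$, every machine of $S_B$ has load exactly $3$; in particular the machine of $S_B$ containing job $1$ holds no other job from $\{2,\ldots,7\}$, and the remaining six unit jobs split three per machine on the other two machines. Combining these constraints with the separations from $<3,3>$ and $<3,2,2>$, and performing a case analysis on the pair of $S_A$ containing job $1$, I would select one or two additional realizations, such as $<3,3,3,1,1,1>$ ($\lambda=4$) or $<3,3,3,3,3,3>$ ($\lambda=6$), each chosen so that $S_A$ has some pair of total size exceeding $\frac 43\lambda$ and so that the $S_B$-structure just derived forces a load strictly above $r\cdot\lambda$ on one of the remaining two machines; this produces the contradiction.

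The principal obstacle is that $\frac 43$ is tight for $m=3$ (matched by our own algorithm), so every contradicting realization must yield a load strictly greater than $r\cdot\lambda$ rather than merely equal to it. This forces the use of realizations with small integer $\lambda$, typically in $\{2,3,4\}$, where the integer load-gap supplies a strict inequality, and it is why a larger number of realizations than in the $m\geq 4$ argument is needed to pin down both solutions' assignments on the first few job indices.
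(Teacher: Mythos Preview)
Your overall strategy (contradiction via realizations with small integer optima) is the same as the paper's, and your first deductions are sound: from $<1,1,1,1,1,1>$ one solution $S_A$ splits $\{1,\dots,6\}$ into three pairs, and from $<3,1,1,1,1,1,1>$ the other solution $S_B$ must isolate job $1$ among the first seven jobs and split $\{2,\dots,7\}$ into two triples. However, the closing step has a genuine gap. Neither of your suggested realizations finishes the key case where the three pairs of $S_A$ each contain exactly one of $\{1,2,3\}$ (for instance $\{1,4\},\{2,5\},\{3,6\}$). For $<3,3,3,1,1,1>$ one has $\lambda=4$ and every $S_A$-pair has load $3+1=4\le\frac{16}{3}$; for $<3,3,3,3,3,3>$ one has $\lambda=6$ and every pair has load $6\le 8$. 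So $S_A$ is not ruled out, and the structure you derived for $S_B$ (an unspecified $3{+}3$ split of $\{2,\dots,7\}$) is also compatible with both inputs. Your side constraints from $<3,3>$ and $<3,2,2>$ say only that the \emph{winning} solution separates $1,2$ (respectively $1,2,3$), and in this case that winning solution is $S_A$ itself, so they add nothing about $S_B$. Completing the argument along your lines would therefore require further branching on where jobs $7,8,\dots$ go in $S_A$, together with additional realizations---considerably more than the ``one or two additional realizations'' you promise.

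The paper avoids this by choosing a sharper opening. From $<3,3,1,1,1>$ (with $\lambda=3$) it forces one solution to have exactly the sets $\{1\},\{2\},\{3,4,5\}$ on the first five jobs, a far more rigid structure than ``three pairs''. From there, $<2,1,1,1,1>$ and $<2,2,2,1,1,1>$ pin the other solution's first six jobs down to $\{1,6\},\{2,j_1\},\{3,j_2\}$ with $\{j_1,j_2\}=\{4,5\}$; then $<3,1,1,1,1,1,1>$ fixes jobs $6,7$ in the first solution; finally two eight-job realizations, $<6,6,1,1,1,1,1,1>$ and $<3,3,3,3,3,1,1,1>$ (both with $\lambda=6$), produce the contradiction. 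The lesson is to start with a realization whose unique optimal partition nails the first solution almost completely, rather than merely forcing a pairing.
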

\begin{proof}
Assume that there is an algorithm with two solutions with competitive ratio $r<\frac 43\approx 1.33333$.
Consider the next four input realizations with seven specified jobs provided as sequences (such that the remaining jobs have zero sizes).

\begin{enumerate}
\item For the job sequence $<3,3,1,1,1,0,0>$, the optimal offline completion time is $3$.
\item For the job sequence $<2,1,1,1,1,0,0>$, the optimal offline completion time is $2$.
\item For the job sequence $<2,2,2,1,1,1,0>$, the optimal offline completion time is $3$.
\item For the job sequence $<3,1,1,1,1,1,1>$, the optimal offline completion time is $3$.
\end{enumerate}

For all these sequences, at least one of the two solutions has to be optimal, since otherwise the makespan is larger by $1$ than an optimal solution, and the competitive ratio is at least $\frac 43$.

Thus, given the first sequence, there is one solution such that the sets for the first five jobs are $\{1\}$, $\{2\}$, and $\{3,4,5\}$. For the second and third sequence, the loads of the third machine and this solution are at least $3$ and $4$ respectively, and therefore the second solution has to be optimal for them (in order to avoid a competitive ratio of $\frac 43$). For the second sequence, an optimal solution for the first five jobs has the set $\{1\}$ and two sets with two jobs each out of $\{2,3,4,5\}$. For the third sequence, an optimal solution for the first six jobs has three sets, each having one job out of $\{1,2,3\}$ and one job out of $\{4,5,6\}$. Thus, for the first six jobs, one set is $\{1,6\}$, and the other two sets are either $\{2,4\}$ and $\{3,5\}$, or $\{2,5\}$ and $\{3,4\}$. For the fourth sequence, both possible second solutions have load of $4$ for the first machine. Thus, the first solution has to be optimal for it. For that, the set of the second machine and the first seven jobs has to be $\{2,6,7\}$ and the other two sets are still $\{1\}$ and $\{3,4,5\}$.

We consider two additional job sequences, which are input realizations with eight jobs of positive sizes.

\begin{enumerate}
\item[5.] For the job sequence $<6,6,1,1,1,1,1,1>$, the optimal offline completion time is $6$.
\item[6.] For the job sequence $<3,3,3,3,3,1,1,1>$, the optimal offline completion time is $6$.
\end{enumerate}

For these sequences (the fifth and sixth sequences), in order to avoid a competitive ratio of at least $\frac 43$, there has to be a solution that has completion time of at most $7$. For the first solution and the fifth sequence, the load of the second machine is $8$. For the first solution and the sixth sequence, the load of the third machine is $9$. Thus, the second solution has to have loads of at most $7$ for all machines, and both jobs sequences. We consider all options for this solution. The sets for the first six jobs are $\{1,6\}$, $\{2,j_1\}$ and $\{3,j_2\}$, where $j_1\neq j_2$, and $\{j_1,j_2\}=\{4,5\}$.
Given the fifth sequence, in order to avoid loads of $8$, jobs $7$ and $8$ belong to the same set as job $3$, and the sets for the eight first jobs are    $\{1,6\}$, $\{2,j_1\}$ and $\{3,j_2,7,8\}$. However, we find the for the sixth sequence, the load of the third machine will be $8$, and the competitive ratio is $\frac 43$, a contradiction.
\end{proof}

\section{Lower bounds for a single solution}
In this section, we show improved lower bounds on the competitive ratio for one-solution algorithms, for any number of machines $m$ such that $m\in \{5,6,\ldots,17\}$. The method is not very different from that of \cite{LSV96}, and the improved values result from a more careful analysis for a small number of machines, and from using an additional simple input. The main goal here is to show an improved lower bound on the competitive ratio for one-solution algorithms for $m=5$, thus showing that our two-solution algorithm has a better performance than any one-solution algorithm. A result of \cite{LSV96} states that in order to obtain a competitive ratio not exceeding $1.5$, the first $2m$ jobs are assigned such that machine $i$ has the jobs of indexes $i$ and $2m+1-i$. We will assume that the algorithm satisfies this property. We consider inputs with at least $2m$ jobs, and in the case where our construction yields a lower bound above $1.5$ on the competitive ratio, we truncate it to $1.5$ (the only case where we do this is $m=17$). A lower bound of $1.5$ on the competitive ratio was proved in \cite{LSV96} for $m\geq 18$, see Table \ref{t1}, where the known results for $m=2,3,4$ are given as well (the bounds for $m=2,3$ are tight). The table also contains our improved lower bounds.

Let $N$ be a large positive integer. Let $n \geq N$ be such that $n$ is divisible by $20!$. All inputs will consist of $n$ jobs of positive sizes (so we will discuss only these jobs). There are three types of inputs. The first two types are inspired by those of the construction of \cite{LSV96}, and the third one is simple. All inputs will be such that an optimal solution can have makespan no larger than $1$, as we will show.

The first class of inputs (or realizations) is defined as follows, for any $i$ such that $1 \leq i \leq m-1$. The input consists of $i$ jobs of size $1$ each, and $n-i$ jobs of size $\frac{m-i}{n}$ each (where the size $\frac{m-i}{n}$ is called the smaller size for this input).

The second class of inputs is defined as follows, for any $i$ such that $2 \leq i \leq m$. The input consists of $2m-i+1$ jobs of size $\frac 12$ each, and $n-2m+i-1$ jobs of size $\frac{i-1}{2n}$ each (where the size $\frac{i-1}{2n}$ is called the smaller size).

The third input consists of $n$ jobs, each of size $\frac mn$.

\begin{lemma}
Each one of the inputs has an optimal offline solution of cost not exceeding $1$.
\end{lemma}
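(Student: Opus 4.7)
The plan is to treat each of the three classes of inputs separately, in each case exhibiting an explicit schedule whose makespan is at most $1$. The role of the divisibility hypothesis $20! \mid n$ (together with $m \leq 17$) is to guarantee that each of the denominators $m$, $m-i$ for $1 \leq i \leq m-1$, and $i-1$ for $2 \leq i \leq m$ divides $n$ evenly, so that when the small jobs are spread among the appropriate machines no integer-rounding effect can raise any load above $1$.

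For the first class with parameter $i$, I would dedicate one machine to each of the $i$ jobs of size $1$, loading it exactly to $1$, and then distribute the $n-i$ small jobs of size $(m-i)/n$ among the remaining $m-i$ machines as evenly as possible. Writing $n = k(m-i)$, the bound $n - i \leq n$ yields $\lceil (n-i)/(m-i) \rceil \leq k$, so each of those machines receives at most $k$ small jobs for a load of at most $k \cdot (m-i)/n = 1$.

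For the second class with parameter $i$, I would fill $m-i+1$ machines with two half-size jobs each (reaching load $1$), place the remaining $i-1$ half-size jobs singly on the other $i-1$ machines, and spread the $n-2m+i-1$ tiny jobs of size $(i-1)/(2n)$ over those $i-1$ half-loaded machines. Using $(i-1) \mid n$ and the trivial inequality $n - 2m + i - 1 \leq n$, it follows that no such machine receives more than $n/(i-1)$ tiny jobs, contributing at most $1/2$ to its load and keeping the total at most $1$. For the third realization, placing $n/m$ jobs of size $m/n$ on each of the $m$ machines (possible because $m \mid n$) yields uniform load $1$.

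The lemma is essentially routine once the divisibility assumption is in hand, so I do not foresee any real obstacle. The step that deserves the most care is the second class, where one must verify that the slack left on the $i-1$ half-loaded machines really does absorb the tiny jobs cleanly; this, however, reduces immediately to the trivial arithmetic inequality mentioned above.
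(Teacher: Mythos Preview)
Your proposal is correct and follows essentially the same argument as the paper: for each class you exhibit the same explicit schedule (dedicating machines to the large jobs and spreading the small ones evenly over the remaining machines), and you invoke the same divisibility facts $(m-i)\mid n$, $(i-1)\mid n$, and $m\mid n$ (all guaranteed by $20!\mid n$ and $m\le 17$) together with the trivial inequalities $n-i\le n$ and $n-2m+i-1\le n$ to conclude that no machine's load exceeds~$1$.
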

\begin{proof}
For the first class of inputs, there will be $i$ machines with one job of size $1$ each. Since $m-i \leq m-1 < 20$, we have that $n$ is divisible by $m-i$. Every machine out of the remaining $m-i$ machines will have at most $\frac{n}{m-i}$ jobs of size $\frac{m-i}{n}$ each, for loads not larger than $1$. The number of such jobs is $n-i$, which is smaller than $n$, so all jobs are assigned. Thus, it is possible to assign all $n$ jobs without exceeding loads of $1$ for the machines.

For the second class of inputs, there will be $m-i+1 \leq m-1$ machines with two jobs of size $\frac 12$ each. The remaining $i-1 \geq 1$ machines get one job of size $\frac 12$ each, and additionally, at most $\frac{n}{i-1}$ jobs of size $\frac{i-1}{2n}$ each. In this case $i-1 \leq m-1$, and therefore $n$ is divisible by $i-1$. Every machine has completion time of at most $1$, and there is space for $n$ jobs, while their number is $n-2m+i-1 \leq n-m-1 <n$.

Since $1 \leq m-i \leq m-1 < 20$, we have that $n$ is divisible by $m-i$. Every machine out of the remaining $m-i$ machines will have at most $\frac{n}{m-i}$ jobs of size $\frac{m-i}{n}$ each. The number of such jobs is smaller than $n$, while it is possible to assign $n$ such jobs without exceeding completion time of $1$ for the machines.

For the last input, since $m \leq 17$, $n$ is divisible by $m$. Thus, it is possible to assign $\frac nm$ jobs to every machine. Since the size of every job is $\frac mn$, the completion time of each machines is $1$, and thus the makespan is $1$.
\end{proof}

Let $n_k$ be the number of jobs (out of the first $n$ jobs) assigned by the algorithm to machine $k$ (for $k=1,2,\ldots,m$), such that $\sum_{k=1}^m n_k=n$. In each of our inputs, the last $n-2m$ jobs have equal sizes, while the first $2m$ jobs are assigned as described above (two to each machine). Thus, for the first two classes of inputs, out of $n_k$ jobs assigned to machine $k$, at least $n_k-2$ jobs are of the corresponding smaller size (depending on the input). For the last input, all jobs have the same size, so there are $n_k>n_k-2$ jobs of one size.
Let $\nu_k=n_k-2$. We have $\sum_{k=1}^m \nu_i = n-2m$. Let $R$ be the competitive ratio of the algorithm. We will use the property that loads cannot exceed $R$ (since the optimal offline completion time is at most $1$ for all cases).

Consider the first input with a fixed $i$, and consider machine $i$ (since the completion time for the schedule is at least its completion time).
This machine has one job of size $1$, and at least $n_i-2=\nu_i$ jobs of the smaller size.
The completion time of machine $i$ is at least $1+\nu_i \cdot \frac{m-i}{n}$. Since an optimal solution has completion time not exceeding $1$, we get $1+\nu_i \cdot \frac{m-i}{n} \leq R$, which holds for $i\leq m-1$.

Consider the second input with a fixed $i$, and consider machine $i$.
This machine has two job of size $\frac 12$ each, and at least $n_i-2=\nu_i$ jobs of the smaller size.
The completion time of machine $i$ is at least $1+\nu_i \cdot \frac{i-1}{2n}$. Since an optimal solution has completion time not exceeding $1$, we get $1+\nu_i \cdot \frac{i-1}{2n} \leq R$, which holds for $i\geq 2$.

For the third input, we have for every machine $i$, $\nu_i \cdot \frac mn \leq R$.
Letting $\rho_i=\frac{\nu_i}n$, we have the three types of constraints $1+\rho_i \cdot (m-i) \leq R$,
$1+ \rho_i \cdot (i-1)/2\leq R$, and $\rho_i \cdot m\leq R$.
Rearranging again, we have $\rho_i  \leq \frac{R-1}{m-i}$,
$\rho_i \leq \frac{2(R-1)}{i-1}$, and $\rho_i \leq \frac Rm$.

This gives us a linear program. In order to show the lower bound, we will use only a part of the constraints, such that there is only one used constraint per machine, which is defined as a function of the index. For $1\leq i \leq \lfloor \frac{2m}3 \rfloor$, we use the constraint of the first class of inputs. For $\lceil \frac{2m}3 \rceil +1 \leq i \leq m$, we use the constraint of the second class of inputs. If $m$ is divisible by $3$, there are no other constraints. If $m=3t+1$ for a positive integer $t$, the constraint of the third input is used for machine $2t+1$. If $m=3t+2$ for a positive integer $t$, the constraint of the third input is used for machine $2t+2$. By taking the sum of constraints, the left hand side is equal to $1-\frac{2m}n$ which is close to $1$ for large values of $N$ and $n\geq N$ (since $m\leq 17$).

Since we are mostly interested in the case $m=5$, we write the constraints for this case in detail. In this case, the first constraint is used for machines $1,2,3$, the second constraint for machine $5$, and the third constraint for machine $4$.
\begin{equation*}
\rho_1 \leq \frac{R-1}4  \ , \ \ \
\rho_2 \leq \frac{R-1}3 \ , \ \ \
\rho_3 \leq \frac{R-1}2 \ , \ \ \
\rho_4 \leq \frac R5 \ , \ \  \
\rho_5 \leq \frac{R-1}2 \ .
\end{equation*}

Taking the sum, and letting $N$ grow to infinity, we have $R \cdot (\frac 14+\frac 13+\frac 12+\frac 15+\frac 12) \geq  1+ (\frac 14+\frac 13+\frac 12+\frac 12)$, which gives $R \geq \frac{155}{107}\approx 1.448598$.

We conclude this section with the following theorem.
\begin{theorem}
Every single-solution algorithm has a competitive ratio of at least $\frac{155}{107}\approx 1.448598$ for $m=5$, and as stated in the third column of Table \ref{t1} for $6 \leq m \leq 17$.
\end{theorem}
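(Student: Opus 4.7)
The plan is to set up the linear program already sketched in the preamble and exhibit an explicit dual certificate. All the structural work is done: by the preceding lemma each input has offline makespan at most $1$, and by the Liu--Shor--Vakhania normal form that any one-solution algorithm with $R<1.5$ must satisfy, jobs $i$ and $2m+1-i$ occupy machine $i$ among the first $2m$ jobs. Inspecting machine $i$ in each of the three families yields three inequalities on the scaled load $\rho_i=(n_i-2)/n$: the first class gives $\rho_i \leq (R-1)/(m-i)$ for $i\leq m-1$, the second class gives $\rho_i \leq 2(R-1)/(i-1)$ for $i\geq 2$, and the third input gives $\rho_i \leq R/m$ for every $i$. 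Combined with the accounting identity $\sum_i \rho_i = 1 - 2m/n$, this is a linear program whose optimum lower bounds the competitive ratio.

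Next, I would select exactly one inequality per machine according to the rule announced in the text: the first-class inequality for $1\leq i \leq \lfloor 2m/3\rfloor$, the second-class inequality for $\lceil 2m/3\rceil+1\leq i\leq m$, and, when $m\not\equiv 0\pmod 3$, the third-input inequality $\rho_i \leq R/m$ for the single middle machine (index $2t+1$ when $m=3t+1$, index $2t+2$ when $m=3t+2$). Summing the $m$ chosen inequalities and letting $N\to\infty$ so that $\sum_i \rho_i \to 1$ produces an inequality of the form $R\cdot S_m \geq 1+T_m$ with explicit rational coefficients $S_m, T_m$, which rearranges to the desired bound. For $m=5$ the partition is $\{1,2,3\}\cup\{4\}\cup\{5\}$, and the inequalities displayed in the preamble sum to
\[
R\Bigl(\tfrac14+\tfrac13+\tfrac12+\tfrac15+\tfrac12\Bigr)\;\geq\; 1+\tfrac14+\tfrac13+\tfrac12+\tfrac12,
\]
i.e.\ $R\geq 155/107$.

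For each $m\in\{6,\dots,17\}$ I would repeat the same summation with the partition dictated by $m\bmod 3$ and record the resulting rational number in Table~\ref{t1}. For $m=17$ the raw bound produced this way slightly exceeds $1.5$, so the entry is truncated to $1.5$, which is valid unconditionally since it is the Liu--Shor--Vakhania bound from \cite{LSV96}. The only real obstacle is clerical: checking in each residue class of $m$ that the two endpoint indices of the lower and upper pieces admit the first-class constraint ($i\leq m-1$) and the second-class constraint ($i\geq 2$) respectively, and then carrying out the rational arithmetic thirteen times. No new input family or new structural argument beyond what is assembled in the preceding lemma is required.
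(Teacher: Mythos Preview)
Your proposal is correct and follows essentially the same argument as the paper: the same three input families, the same per-machine constraints $\rho_i\leq (R-1)/(m-i)$, $\rho_i\leq 2(R-1)/(i-1)$, $\rho_i\leq R/m$, the same selection rule by residue of $m$ modulo $3$, the same summation and limit, and the same truncation to $1.5$ at $m=17$. One cosmetic correction: the normal form you invoke is from Liu, Sidney, and van~Vliet~\cite{LSV96}, not ``Liu--Shor--Vakhania.''
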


{\begin{table}[h!] {\renewcommand{\arraystretch}{1.2   }
\small
$$
\begin{array}{||c||c||c||}
\hline \hline
 m & \mbox{previous lower bound \cite{LSV96}}  &   \mbox{new lower bound}  \\
\hline
2 & \frac 43 \approx 1.3333 & - \\ \hline
3 & 1.4 & - \\ \hline
4 & 1.4375 & - \\ \hline
5 & 1.3704 & \frac{155}{107}\approx 1.448598 \\ \hline
6 & 1.4539 & \frac{191}{131}\approx 1.458015 \\ \hline
7 & 1.4542 & \frac{1127}{767}\approx  1.469361\\ \hline
8 & 1.4485 & \frac{2278}{1543}\approx 1.476344 \\ \hline
9 & 1.4775 & \frac{2593}{1753}\approx  1.479178 \\ \hline
10 & 1.4776 & \frac{257}{173}\approx 1.485549 \\ \hline
11 & 1.4744 & \frac{341}{229}\approx 1.489082 \\ \hline
12 & 1.4891 & \frac{1873}{1257}\approx  1.490055 \\ \hline
13 & 1.4892 & \frac{7449}{4985}\approx  1.494282 \\ \hline
14 & 1.4872 & \frac{201739}{134815}\approx 1.496413 \\ \hline
15 & 1.4961 & \frac{217183}{145111}\approx 1.496668 \\ \hline
16 & 1.4961 & \frac{2027686}{1352011}\approx  1.499755\\ \hline
17 & 1.4947 &  1.5 \\ \hline
m \geq 18 & 1.5 & - \\ \hline

\hline
\end{array}
$$}
\caption{\label{t1} Lower bounds for single-solution algorithms, as a function of the number of machines $m$.}
\end{table}
}

\bibliographystyle{abbrv}

\end{document}